\newtheorem{theorem}{Theorem}
\def\ii{\mathrm{i}}
\def\ee{\mathrm{e}}
\def\no{\nonumber}
\global\long\def\ell#1{\theta_{#1}}
\global\long\def\bell#1{\tilde\theta_{#1}}
\DeclareSymbolFont{usualmathcal}{OMS}{cmsy}{m}{n}
\DeclareSymbolFontAlphabet{\mathcal}{usualmathcal}
\begin{document}

\pagestyle{SPstyle}

\begin{center}{\Large \textbf{\color{scipostdeepblue}{
Generalized Integrable Boundary States in XXZ and XYZ Spin Chains\\
}}}\end{center}

\begin{center}\textbf{
Xin Qian\textsuperscript{1,2} and
Xin Zhang\textsuperscript{3$\dagger$}
}\end{center}

\begin{center}
{\bf 1} School of Physics and Shing-Tung Yau Center, Southeast University, Nanjing 210096, China
\\
{\bf 2} Niels Bohr International Academy, Niels Bohr Institute, Blegdamsvej 17, 2100 Copenhagen, Denmark
\\
{\bf 3} Beijing National Laboratory for Condensed Matter Physics,
Institute of Physics, Chinese Academy of Sciences, Beijing 100190, China
\\[\baselineskip]
$\dagger$ \href{mailto:email2}{\small xinzhang@iphy.ac.cn}
\end{center}

\section*{\color{scipostdeepblue}{Abstract}}
\textbf{\boldmath{%
We investigate integrable boundary states in the anisotropic Heisenberg chain under periodic or twisted boundary conditions, for both even and odd system lengths. Our work demonstrates that the concept of integrable boundary states can be readily generalized. For the XXZ spin chain, we present a set of factorized integrable boundary states using the $\mathrm{KT}$-relation, and these states are also applicable to the XYZ chain. It is shown that a specific set of eigenstates of the transfer matrix can be selected by each boundary state, resulting in an explicit selection rule for the Bethe roots.
}}

\vspace{\baselineskip}

\noindent\textcolor{white!90!black}{%
\fbox{\parbox{0.975\linewidth}{%
\textcolor{white!40!black}{\begin{tabular}{lr}%
  \begin{minipage}{0.6\textwidth}%
    {\small Copyright attribution to authors. \newline
    This work is a submission to SciPost Physics. \newline
    License information to appear upon publication. \newline
    Publication information to appear upon publication.}
  \end{minipage} & \begin{minipage}{0.4\textwidth}
    {\small Received Date \newline Accepted Date \newline Published Date}%
  \end{minipage}
\end{tabular}}
}}
}


\vspace{10pt}
\noindent\rule{\textwidth}{1pt}
\tableofcontents
\noindent\rule{\textwidth}{1pt}
\vspace{10pt}


\section{Introduction}
\label{sec:intro}
Integrability is a powerful tool in the study of one-dimensional models. Integrable models possess enough conserved quantities, which allows us to derive its exact solutions. Since the pioneering work of Yang and Baxter \cite{PhysRevLett.19.1312,PhysRevLett.26.832}, many approaches have been developed to obtain exact solutions of integrable models, including the $T$–$Q$ relation \cite{baxter1982}, the algebraic Bethe ansatz \cite{Korepin_Bogoliubov_Izergin_1993,faddeev1996algebraicbetheansatzworks}, Sklyanin's separation of variables \cite{sklyanin1989}, the thermodynamic Bethe ansatz \cite{takahashi1999}, and the off‑diagonal Bethe ansatz \cite{Wang2015}. Integrable systems provide exact solvable laboratories for testing theoretical ideas, understanding non-perturbative effects, exploring the physics of many-body interactions, and developing powerful analytical techniques. This point of view is backed up by the revolutionary progress in cold atomic physics \cite{Guan_2013}.


A fundamental question in quantum many-body physics is whether and how a closed quantum system reaches thermal equilibrium. Quantum integrable systems are notable examples of systems that do not thermalize, where the presence of higher conserved charges is linked to the absence of full relaxation to equilibrium, and the correct statistical description of these systems is called the Generalized Gibbs Ensemble \cite{Rigol_2007, PhysRevLett.97.156403, Cramer_2008, Calabrese_2011, Caux_2012, Pozsgay_2013, Wouters_2014, Ilievski_2015}. The non-equilibrium dynamics of integrable model has been observed experimentally \cite{David2006quantum, Bloch_2008, Polkovnikov_2011}, and studied theoretically \cite{Calabrese_2016, Bastianello_2022}. A probably simplest protocol inducing out of equilibrium dynamics is the so called quantum quench, where a system is prepared in a well defined initial state and let evolve unitarily according to a known Hamiltonian \cite{Calabrese_2006,Calabrese_2007}. Among many analytical methods, quench action approach has been proven to be a powerful tool in the study of quantum quench in integrable models \cite{Caux_2013}. A key ingredient of this approach is the calculation of overlaps between the initial state and the eigenstates of the post-quench Hamiltonian. A subclass of integrable initial states, called integrable boundary states, was discussed in Ref. \cite{Piroli_2017}, and their definition is inspired by the classical work \cite{GHOSHAL_1994}. Among them, initial states that can be constructed from the solution of the boundary Yang–Baxter equation \cite{E_K_Sklyanin_1988} have received considerable attention due to their neat overlaps with Bethe states. For the isotropic XXX chain, the overlaps between these states and Bethe states are known, and the corresponding results can be generalized to higher rank cases \cite{gombor2024exactoverlapsallintegrable, gombor2025derivationsmpsoverlapformulas}. The key feature that shows up in the overlap is a universal term of the ratio of Gaudin determinant \cite{Pozsgay_2014, Brockmann_2014}, while the pre-factor is boundary dependent. 

Integrable boundary states are also important in the weak coupling limit of integrability within the $\mathrm{AdS/CFT}$ \cite{de_Leeuw_2015} correspondence. In this context, a key class of states known as Matrix Product States (MPS) arises \cite{Buhl_Mortensen_2016}. These MPS are integrable and can be interpreted as boundary states corresponding to a domain wall version of the $\mathcal{N}=4$ supersymmetric Yang-Mills ($\mathrm{SYM}$) theory. In many cases, the one-point functions of conformal operators—given by the overlap between these MPS and the Bethe eigenstates of the underlying integrable spin chain \cite{Minahan_2003, Beisert_2011} —can be computed in a closed form \cite{de_Leeuw_2016, de_Leeuw_2018}. One can refer to \cite{deleeuw2017introductionintegrabilityonepointfunctions} for a review. Alternatively, one can calculate such overlap using the representation theory of twisted Yangian \cite{de_Leeuw_2020, gombor2025exactoverlapsintegrablematrix}. Furthermore, three-point correlation functions involving two giant gravitons and a single-trace operator in $\mathcal{N}=4$ $\mathrm{SYM}$ can be evaluated using MPS of bond dimension two \cite{Jiang_2019, Jiang_2020}. The resulting expression generalizes the notion of $g$-functions in two-dimensional quantum field theory. A similar structure appears in the lower-dimensional $\mathrm{ABJM}$ theory. Here, the spectral problem is encoded in an alternating integrable spin chain with distinct representations on even and odd-site \cite{Minahan_2008, Bak_2008}. The three-point function of two giant gravitons and a single-trace operator has likewise been computed using MPS \cite{Yang_2022, yang2024integrableboundarystatesmaximal}. Moreover, one can formulate a $1/2$-$\mathrm{BPS}$ domain wall version of $\mathrm{ABJM}$ theory, whose corresponding one-point functions admit a closed-form expression via $\mathrm{MPS}$ techniques applied to the alternating spin chain \cite{Kristjansen_2022}. 

\begin{table}[htbp]
\centering
\begin{tabular}{|c|c|c|}
\hline
\diagbox{Boundary~~~\\state} {Twist $G$~~}& $\sigma^\alpha$ &  $\mathbb{I}$ \\[4pt]
\hline
 & & \\[-2pt]
$|\Psi_{+,e}\rangle$ &  $\left(\sum\limits_{\sigma=\pm}a_\sigma\,\lvert\varphi_{\alpha,\sigma}\rangle\otimes\lvert \varphi_{\alpha,-\sigma}\rangle\right)^{\otimes L/2}$   &  $~~{\begin{pmatrix}
    a\\
    b \\
    c \\
    d\\
\end{pmatrix}}^{\otimes L/2}$\\ 
&& \\
\hline 
 && \\
$|\Psi_{-,e}\rangle$ &   $\left(\sum\limits_{\sigma=\pm}a_\sigma\,\lvert\varphi_{\alpha,\sigma}\rangle\otimes \lvert\varphi_{\alpha,\sigma}\rangle\right)^{\otimes L/2}$  & ---\\ 
&& \\
\hline 
\end{tabular}
\caption{Two-site integrable boundary states $\lvert\Psi_{\pm,e}\rangle$ in even sites spin chain. Notation: $t(u)$ is corresponding transfer matrix, $\sigma^{\alpha}\lvert\varphi_{\alpha,\pm}\rangle=\pm\lvert\varphi_{\alpha,\pm}\rangle$, $\sigma^{\alpha}\ \mathrm{with}\ \alpha=1,2,3$ are Pauli matrices, $a_{\pm}\in\mathbb{C}$ and $L\in 2k,\ k\in\mathbb{N}$ is the number of sites. Here $G$ is the twist matrix, with $G=\mathbb{I}$ corresponding to the periodic boundary.}
\label{tab1}
\end{table}

A state $|\Psi\rangle$ satisfying the condition $t(u) |\Psi\rangle= t(-u) |\Psi\rangle$ is referred to as an integrable boundary state, where $t(u)$ denotes the transfer matrix. So far, most studies of integrable boundary states have focused on spin chains with an even number of sites. In this paper, we propose the boundary states $|\Psi_{\pm, s}\rangle$ satisfying
\begin{align}
t(u) |\Psi_{\pm, s}\rangle=\pm t(-u)|\Psi_{\pm, s}\rangle,\quad s=e,o, \label{def:BS}
\end{align}
where the index $s=e,o$ corresponds to even and odd site numbers, respectively. We obtain explicit expressions for several factorized boundary states in anisotropic spin chains with periodic and twisted boundaries. Our generalized integrable boundary states are summarized in Tables \ref{tab1} and \ref{tab2},  and they are valid for both the XXZ and XYZ chains. The novelty of this work lies in the following aspects. First, we establish the existence of integrable boundary states in systems with an odd site number. Second, in contrast to existing work focused only on the ``$+$'' branch in Eq. (\ref{def:BS}), we investigate both the ``$+$'' and ``$-$'' branches comprehensively. Furthermore, we extend the framework of integrable boundary states from the XXX/XXZ chain to the XYZ chain. In addition, we study XXZ/XYZ spin chains with both periodic and twisted boundary conditions.
All integrable boundary states presented in this work are factorized and derived from the $\mathrm{KT}$-relation, no matter what integrable boundary conditions are imposed or whether the number of sites is even or odd. We also discuss the non‑existence of certain boundary states in particular integrable models (see the blank entries in Tables \ref{tab1} and \ref{tab2}).
\begin{table}[htbp]
\centering
\begin{tabular}{|c|c|c|}
\hline
\diagbox{Boundary~~~ \\ state} {Twist $G$~~}& $\sigma^\alpha$ &  $\mathbb{I}$ \\
\hline
&& \\ $|\Psi_{+,o}\rangle$ &  $\prod\limits_{{\rm even}\, n}\sigma_n^\alpha\,{\begin{pmatrix}
    1\\
    a
\end{pmatrix}}^{\otimes L}$   &  ---\\ 
&&\\
\hline 
&&\\ $|\Psi_{-,o}\rangle$ &   ---  & ${\begin{pmatrix}
    1\\
    a
\end{pmatrix}}^{\otimes L}$\\ 
&& \\
\hline 
\end{tabular}
\caption{Integrable boundary states $\lvert\Psi_{\pm,o}\rangle$ in odd-site spin chain. Notation: $t(u)$ is corresponding transfer matrix, $\sigma^{\alpha}\ \mathrm{with}\ \alpha=1,2,3$ are Pauli matrices, $a\in\mathbb{C}$ and $L\in 2k+1,\ k\in\mathbb{N}$ is the number of sites.}
\label{tab2}
\end{table}

A key feature of integrable boundary states is that they select particular eigenstates (Bethe states) of the transfer matrix and Hamiltonian. The overlap between a Bethe state and an integrable boundary state may be nonzero only when the eigenvalue $\Lambda(u)$ of $t(u)$ satisfy $\Lambda(u)=\pm\Lambda(-u)$. For the XXZ and XYZ chains with integrable boundaries, the function $\Lambda(u)$ can be parameterized by the $T$-$Q$ relation \cite{baxter1982,faddeev1996algebraicbetheansatzworks,Cao_2013,Cao_2014,Wang2015,TakFad79,Slavnov_2020,Cao_2015}, where a set of Bethe roots is introduced. As a consequence, $\Lambda(u)=\pm\Lambda(-u)$ will leads to some constraints for the Bethe roots. In some cases, these constraints force the roots to form specific pairs. In others, no simple pairing pattern exists; instead, the roots should satisfy a set of non-trivial equations. In each case, we present the corresponding $T\mbox{-}Q$ relation, and then discuss the selection rule.

This paper is organized as follows. Section \ref{sec2} serves as a review of previous results in the periodic $\mathrm{XXZ}$ spin chain and sets up the notation. We establish a generalization of integrable boundary states and show that they exist in odd‑site periodic XXZ chains. In section \ref{sec3}, we lay out the explicit expression for generalized integrable boundary states and the corresponding selection rules in $\mathrm{XXZ}$ spin chain with twisted boundaries. Section \ref{sec4} focuses on the generalization of integrable boundary states to $\mathrm{XYZ}$ chain with periodic and twisted boundaries. Then, we give the corresponding selection rule of the integrable boundary states in Section \ref{sec5}. In section \ref{sec6}, we conclude our results and provide a outlook for future research. Appendix \ref{Appendix_A} is about the convention of elliptic functions. Appendix~\ref{Appendix_B} presents the construction of the dual states for the integrable boundary state. We prove the equivalence of $\mathrm{KYB}$ equation and $\mathrm{BYB}$ equation for both $\mathrm{XXZ}$ and $\mathrm{XYZ}$ chain in Appendix \ref{Appendix_C}. The component expressions of the $\mathrm{KT}$-relation are given in Appendix \ref{Appendix_D}.

\section{Integrable boundary states in the periodic XXZ chain}
\label{sec2}
So far, most studies of integrable boundary states have focused on periodic systems. In this section, using the periodic XXZ spin chain as an example, we recall some fundamental concepts of integrable models, review existing research results on integrable boundary states, and then generalize the definition of integrable boundary states.

\subsection{Integrability and exact solutions of the periodic XXZ chain}
The Hamiltonian of the periodic XXZ chain is
\begin{equation}
H=\sum_{i=1}^L\,\sum_{\alpha=x,y,z}J^{\alpha}\sigma_i^{\alpha}\sigma_{i+1}^{\alpha},\quad J^x=J^y=1,\quad J^z=\cosh\eta,\quad \sigma_{L+1}^{\alpha}\equiv\sigma_1^{\alpha}.
\label{ham}
\end{equation}
The $R$-matrix corresponding to the XXZ model reads 
\begin{equation}
    R_{0,j}(u)=\frac{1}{2}\left[\frac{\sinh(u+\eta)}{\sinh\eta}(1+\sigma_0^z\sigma_j^z)+\frac{\sinh u}{\sinh\eta}(1-\sigma_0^z\sigma_j^z)\right]+\frac{1}{2}(\sigma_0^x\sigma_j^x+\sigma_0^y\sigma_j^y),
    \label{Rmatxxz}
\end{equation}
where $0$ index denotes the auxiliary space and $j$ denotes the $j$-th site physical space. Lax operator is defined by a shift of spectral parameter of the $R$ matrix $\mathcal{L}_{0,j}(u)=R_{0,j}(u-\eta/2)$. The Lax operator satisfies the crossing relation
\begin{equation}
    \mathcal{L}_{0,1}(u)=-\sigma_1^y\mathcal{L}_{0,1}^{\mathrm{t}}(-u)\sigma_1^y,
\end{equation}
with $\mathrm{t}$ denoting transpose. One can build the monodromy matrix from the Lax operator
\begin{equation}
    T_0(u)=\mathcal{L}_{0,L}(u)\dots \mathcal{L}_{0,1}(u)=\begin{pmatrix}
        A(u)&B(u)\\C(u)&D(u)
    \end{pmatrix},\label{mon:matrix:xxz}
\end{equation}
and the $\mathrm{RTT}$-relation is guaranteed by the fundamental relation
\begin{equation}
    R_{0,\bar{0}}(u-v)\mathcal{L}_{0,j}(u)\mathcal{L}_{\bar{0},j}(v)=\mathcal{L}_{\bar{0},j}(v)\mathcal{L}_{0,j}(u)R_{0,\bar{0}}(u-v).\label{RLL}
\end{equation}
The trace of the monodromy matrix in the auxiliary space gives the transfer matrix
\begin{equation}
t(u)=\mathrm{Tr}_0(\mathrm{T}_0(u))=A(u)+D(u).\label{Transfer:PBC}
\end{equation} 
The logarithmic derivative of the transfer matrix yields the Hamiltonian in Eq. (\ref{ham})
 \begin{equation}
     H=2\sinh\eta\frac{\partial\ln t(u)}{\partial u}\Big\lvert_{u=\frac{\eta}{2}} - L\cosh\eta\times\mathbb{I}.
 \end{equation}
The eigenvalue of the transfer matrix, denoted as $\Lambda(u)$, can be given by the $T$-$Q$ relation \cite{baxter1982}
\begin{equation}
\Lambda(u)Q(u)=a(u)Q(u-\eta)+d(u)Q(u+\eta),\label{TQ:Periodic}
\end{equation}
where 
\begin{align}
d(u)=a(u-\eta)=\left[\frac{\sinh(u-\frac{\eta}{2})}{\sinh\eta}\right]^L,\label{def:ab}
\end{align}
and 
\begin{equation}
    Q(u)=\prod_{j=1}^M\frac{\sinh(u-u_j)}{\sinh\eta}.\label{Q:PBC}
\end{equation}
The integer $M$ in (\ref{Q:PBC}) takes values from $0$ to $L$, which can be seen as the magnon number. 
The Bethe roots $\{u_1,\dots,u_M \}$ must satisfy the following Bethe equations
\begin{equation}\label{BAE:PBC}
\left[\frac{\sinh(u_j+\frac{\eta}{2})}{\sinh(u_j-\frac{\eta}{2})}\right]^L=\prod_{k\neq j}^M\frac{\sinh(u_j-u_k+\eta)}{\sinh(u_j-u_k-\eta)},\quad l=1,\dots,M.
\end{equation}


One can also construct the eigenvectors of the transfer matrix by the algebraic Bethe ansatz approach \cite{Korepin_Bogoliubov_Izergin_1993}, specifically as follows
\begin{align}
\lvert \mathbf{u}\rangle=B(u_1)\dots B(u_M)\lvert0\rangle,\\
\langle \mathbf{u}|=\langle 0|C(u_1)\dots C(u_M),
\end{align}
where $\lvert0\rangle$ is the all spin up state and $\mathbf{u}=\{u_1,\ldots,u_M\}$ are given by Eq. (\ref{BAE:PBC}). The energy corresponding to Bethe state $|\mathbf{u}\rangle$ is given by
\begin{equation}
    E(\mathbf{u})=2\sum_{j=1}^{M}[g(u_j-\tfrac{\eta}{2})-g(u_j+\tfrac{\eta}{2})]+Lg(\eta),\quad g(u)=\sinh\eta\coth u.
\end{equation}
For the later calculation of the overlap between integrable boundary states and Bethe states, we need to introduce norm of Bethe states, denote 
\begin{equation}
    \mathcal{K}_s(u)=\coth(s\eta+u)-\coth(-s\eta+u),\quad  f(u,v)=\frac{\sinh(u-v-\eta)}{\sinh(u-v)},
\end{equation}
then the norm of Bethe states takes the form
\begin{equation}
    \langle\mathbf{u}\lvert\mathbf{u}\rangle=\big(\sinh(\eta)\big)^M\prod_{j=1}^{M}a(u_j)d(u_j)\prod_{k<j}^{M}f(u_j,u_k)\det G
\label{norm_peri}
\end{equation}
where the Gaudin matrix $G$ takes the following form 
\begin{equation}
    G_{ik}=-\delta_{ik}\Big(L\mathcal{K}_{1/2}(u)+\sum_{j=1}^M\mathcal{K}_{-1}(u_i-u_j)\Big)+\mathcal{K}_{-1}(v_i-v_k),\quad i,k=1,\dots M.    
\label{Bethe_norm_pxxz}
\end{equation}
\subsection{Integrable boundary states}\label{sec:xxz:bs}
Following Refs. \cite{Jiang_2020_overlap, Gombor_2021}, a state $|\Psi \rangle$ is called an \emph{integrable boundary state} if it satisfies the following constraint
\begin{equation}
    [t(u)-t(-u)]\lvert\Psi\rangle=0.
    \label{intconI}
\end{equation}

It turns out that a very large class of states satisfies Eq. \eqref{intconI} \cite{Caetano_2022, He_2023}. Among these, there exists a subclass that exhibits a factorized structure and can be constructed using the boundary $K$-matrix. In this work, we focus on this specific subclass.

The generic $K$-matrix of the XXZ model can be written in the following form \cite{Vega_1993}
\begin{align}
K(u)&=\frac{\alpha_1}{k_1(u)}\sigma^x+\frac{\alpha_2}{k_2(u)}\sigma^y+\frac{\alpha_3}{k_3(u)}\sigma^z+\frac{\alpha_4}{k_4(u)}\mathbb{I}\nonumber\\
&=\left(
\begin{array}{cc}
k_{11}(u) & k_{12}(u) \\
k_{21}(u) & k_{22}(u) \\
\end{array}
\right),\label{K:XXZ}
\end{align}
where $\alpha_1,\alpha_2,\alpha_3$ and $\alpha_4$ are free parameters and 
\begin{align}
k_1(u)=k_2(u)=1,\quad  k_3(u)=2\cosh u,\quad k_4(u)=2\sinh u.
\end{align}
The $K$-matrix Eq. (\ref{K:XXZ}) satisfies the boundary Yang-Baxter equation
\begin{equation}
    R_{1,2}(u-w)K_1(u)R_{1,2}(u+w)K_2(w)=K_2(w)R_{1,2}(u+w)K_1(u)R_{1,2}(u-w).
    \label{BYBeq}
\end{equation}

\begin{theorem}\label{thm0}
For an even number of sites system, we can construct the following integrable boundary state
\begin{equation}
    \lvert \Psi_0\rangle=\lvert\psi_0\rangle_{1,2}^{\otimes L/2},
    \label{intxxzp}
\end{equation}
where $\lvert\psi_0\rangle_{1,2}$ is a two-site state derived from the $K$-matrix
\begin{equation}
\lvert\psi_0\rangle_{1,2}=\sum_{i,j=1}^2[K(-\tfrac{\eta}{2})\sigma^y]_{ij}\lvert i\rangle_1\otimes\lvert j\rangle_2.\label{two:site:state}
\end{equation}
\end{theorem}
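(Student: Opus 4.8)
The plan is to lift the scalar identity $[t(u)-t(-u)]\lvert\Psi_0\rangle=0$ to an operator identity in the auxiliary space --- the \textbf{KT-relation} --- and then recover the theorem by taking a trace. Concretely, I would aim to show that the monodromy matrix \eqref{mon:matrix:xxz} obeys a reflection relation of the schematic form
\begin{equation}
T_0(u)\,\lvert\Psi_0\rangle = G_0(u)\,\widehat{T}_0(-u)\,G_0(u)^{-1}\,\lvert\Psi_0\rangle,
\label{plan:KT}
\end{equation}
where $G_0(u)$ is built from the boundary $K$-matrix $K_0(u)$ together with the crossing matrix $\sigma_0^y$, and $\widehat{T}_0(-u)$ is the monodromy matrix at $-u$ up to an auxiliary-space transposition (which reverses the order of the Lax operators). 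Since $\mathrm{Tr}_0$ is cyclic and invariant under transposition in the auxiliary space, applying $\mathrm{Tr}_0$ to \eqref{plan:KT} collapses the conjugation by $G_0(u)$ and yields at once $t(u)\lvert\Psi_0\rangle=\mathrm{Tr}_0[\widehat{T}_0(-u)]\lvert\Psi_0\rangle=t(-u)\lvert\Psi_0\rangle$, which is precisely \eqref{intconI}. This reduces Theorem \ref{thm0} to establishing \eqref{plan:KT}.

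Because $L$ is even and the pairs occupy adjacent sites $(1,2),(3,4),\dots$, the monodromy matrix factorizes into adjacent two-site blocks, $T_0(u)=\mathcal{L}_{0,L}(u)\mathcal{L}_{0,L-1}(u)\cdots\mathcal{L}_{0,2}(u)\mathcal{L}_{0,1}(u)$, each block $\mathcal{L}_{0,2k}(u)\mathcal{L}_{0,2k-1}(u)$ acting on a single pair. I would therefore first prove the \emph{local} KT-relation on one pair,
\begin{equation}
\mathcal{L}_{0,2}(u)\,\mathcal{L}_{0,1}(u)\,\lvert\psi_0\rangle_{1,2} = K_0(u)\,\mathcal{L}_{0,1}(-u)\,\mathcal{L}_{0,2}(-u)\,K_0(u)^{-1}\,\lvert\psi_0\rangle_{1,2},
\label{plan:local}
\end{equation}
and then propagate it block by block. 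The crucial point that makes the global relation telescope into \eqref{plan:KT} is that the $K$-matrix produced at one block can be carried through the Lax operators of the neighbouring block by means of the boundary Yang--Baxter equation \eqref{BYBeq}; this is exactly the reflection-algebra mechanism that allows the inner $K_0(u)^{-1}K_0(u)$ factors to cancel, leaving only the outer conjugation.

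To prove \eqref{plan:local} I would use the crossing relation $\mathcal{L}_{0,1}(u)=-\sigma_1^y\mathcal{L}_{0,1}^{\mathrm{t}_1}(-u)\sigma_1^y$ to trade the Lax operators at $u$ for transposed Lax operators at $-u$. The physical transposition can then be ``folded'' across the pair state: for $\lvert\psi_0\rangle_{1,2}=\sum_{i,j}[K(\tfrac{\eta}{2})\sigma^y]_{ij}\lvert i\rangle_1\otimes\lvert j\rangle_2$, transposing an operator on site $1$ is equivalent to conjugating the defining matrix $K(\tfrac{\eta}{2})\sigma^y$, which shifts the action onto site $2$ and produces the boundary $K$-matrix in the auxiliary space. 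What then remains is to show that the matrix $K(\tfrac{\eta}{2})$ carried by the state, together with the two Lax operators $\mathcal{L}_{0,j}(u)=R_{0,j}(u-\tfrac{\eta}{2})$, reorganizes into the $u$-dependent reflection $K_0(u)$ appearing in \eqref{plan:local}. This is precisely the content of \eqref{BYBeq}, with the initial condition fixed at the crossing point $u=\tfrac{\eta}{2}$ --- which explains why the state is built from $K(\tfrac{\eta}{2})$ rather than from $K$ at a generic argument.

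I expect the main obstacle to be this last step: carefully matching the shifted spectral arguments and the $\sigma^y$ dressings so that the transposition generated by crossing combines with $K(\tfrac{\eta}{2})$ into the structure governed by \eqref{BYBeq}, rather than the trace bookkeeping, which is routine. Two technical points also need checking: that $K(u)$, and hence $K(\tfrac{\eta}{2})\sigma^y$, is invertible for generic $\alpha_i$, so that the conjugations in \eqref{plan:KT} and \eqref{plan:local} are well defined; and that the matrix used is genuinely a solution of \eqref{BYBeq}, which is guaranteed by the form \eqref{K:XXZ}. Once \eqref{plan:local} is established, the telescoping of the $L/2$ blocks and the final trace over the auxiliary space complete the proof.
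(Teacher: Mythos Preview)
Your plan is essentially the paper's own approach: establish a KT-relation in the auxiliary space and then trace. The paper packages the local two-site identity as a \emph{KYB equation} for the vectorized $K$-matrix (shown in Appendix~\ref{Appendix_C} to be equivalent to the BYB equation \eqref{BYBeq}), iterates it over all pairs to obtain $\sigma_0^y K_0^{\mathrm t}(-u)\sigma_0^y\,T_0(u)\lvert\Psi_0\rangle=T_0(-u)\,\sigma_0^y K_0^{\mathrm t}(-u)\sigma_0^y\lvert\Psi_0\rangle$, and then traces --- so your $G_0(u)$ is precisely $\sigma_0^y K_0^{\mathrm t}(-u)\sigma_0^y$ and no auxiliary transposition is needed in $\widehat T_0$.

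One point to correct in execution: the telescoping does \emph{not} require the BYB equation to carry $K_0$ through neighbouring blocks. Once the local relation is written in intertwining form, $K'_0\,\mathcal L_{0,2k}(u)\mathcal L_{0,2k-1}(u)\lvert\psi_0\rangle=\mathcal L_{0,2k}(-u)\mathcal L_{0,2k-1}(-u)\,K'_0\lvert\psi_0\rangle$ (same order of Lax operators on both sides), $K'_0$ simply passes from one pair to the next and the global relation follows immediately. Your version \eqref{plan:local} swaps the order $\mathcal L_{0,1}\mathcal L_{0,2}$ on the right, which upon iteration would produce a product that is neither $T_0(-u)$ nor $T_0^\pi(-u)$; the crossing relation and the $\sigma^y$ folding across $\lvert\psi_0\rangle$ in fact conspire to keep the order intact. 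The BYB/KYB equation is used only to prove the local identity itself (exactly the ``matching of shifted spectral arguments and $\sigma^y$ dressings'' you flagged as the hard step), not for inter-block transport.
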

\begin{proof}
    Denote the vector version of the boundary $K(u)$ matrix as $\vec{K}'(u)$, which takes the form
\begin{equation}
    \vec{K}'(u)=\sum_{i,j}k'_{ij}(u)e_i\otimes e_j,\quad k'_{ij}=\left[K(-u)\sigma^y\right]_{ij},
\end{equation}
where the vector $e_1=(1,\,0)$ and $e_2=(0,\,1)$. We get a similar $\mathrm{KYB}$ equation (see Appendix \ref{Appendix_B} for more details)
\begin{equation}
    \mathcal{L}_{1,3}(u-v)\mathcal{L}_{1,4}(u+v)\vec{K}'_{1,2}(u)\vec{K}'_{3,4}(v+\tfrac{\eta}{2})=\mathcal{L}_{2,4}(u-v)\mathcal{L}_{2,3}(u+v)\vec{K}'_{1,2}(u)\vec{K}'_{3,4}(v+\tfrac{\eta}{2}).
    \label{KYBEQ2}
\end{equation}
From the above KYB equation Eq. (\ref{KYBEQ2}), one will get the $\mathrm{KT}$-relation
\begin{equation}
    \sigma_0^y K_0^{\mathrm{t}}(-u)\sigma_0^yT_0(u)\lvert\Psi_0\rangle= T_0(-u)\sigma_0^y K_0^{\mathrm{t}}(-u)\sigma_0^y\lvert\Psi_0\rangle.
    \label{KT3}
\end{equation}
Assuming that all eigenstates of the reflection matrix $K(u)$ are non-degenerate, taking the trace over the auxiliary space on both sides of \eqref{KT3} yields directly $[t(u)-t(-u)]\lvert\Psi_0\rangle=0$.
\end{proof}
 Obviously, the state $|\Psi_{0}\rangle$ introduced in Eq. \eqref{intxxzp} is a boundary state for a chain with an even $L$, and has a factorized structure. 
Because the $K$–matrix contains four independent parameters, the two‑site state
$|\psi_{0}\rangle_{1,2}$ in \eqref{two:site:state} can represent an arbitrary
vector of the tensor‑product space $V_{1}\!\otimes\! V_{2}$ by an appropriate choice of the boundary parameters $\{\alpha_1,\ldots,\alpha_4\}$.  One can also construct the bra vector $\bra{\Psi_0}$ with the same technique, see Appendix \ref{Appendix_B}.

It should be noted that the KT relation \eqref{KT3} yields a set of identities regarding the matrix elements of $T(-u)$ and $T(u)$. These identities are essential for proving the existence of boundary states in spin chains with an odd number of sites. More details are given in Appendix \ref{Appendix_D}.

\paragraph{Selection rule}

A characteristic feature of an integrable boundary state is that it selects only a subset of Bethe states. Consider the overlap between a Bethe state
$|\mathbf{u}\rangle$ and the boundary state $|\Psi\rangle$.  From Eq. \eqref{intconI} we obtain  
$$\langle \mathbf{u}|[t(u)-t(-u)]|\Psi\rangle
= [\Lambda(u)-\Lambda(-u)]\langle \mathbf{u}|\Psi\rangle=0. 
$$
This equation implies that the overlap $\langle\mathbf{u}|\Psi\rangle$
can be non‑zero only when  
$\Lambda(u)=\Lambda(-u)$.
Thus the boundary state imposes a selection rule on the
configurations of Bethe roots.


A sufficient and necessary condition for $\Lambda(u)=\Lambda(-u)$ is that the Bethe roots in Eqs. \eqref{TQ:Periodic} and \eqref{BAE:PBC} satisfy the following selection rules
\begin{equation}
\{u_1,\dots,u_M\}=\{-u_1+\ii\pi l_1,\dots,-u_M+\ii\pi l_M\},\quad l_i\in\mathbb{Z}, \label{Selection:PBC}
\end{equation}
with even sites $L$. We should emphasize that given the above selection rules, the overlap $\langle\mathbf{u}\lvert\Psi\rangle$ may still be zero in some specific cases, and this comment applies to the rest of paper.

The overlap $\langle\mathbf{u}|\Psi\rangle$ plays a crucial role in the study of non‑equilibrium dynamics of integrable systems. For the state $|\Psi_0\rangle$ introduced in Eq. \eqref{intxxzp}, it has been shown that, when the Bethe roots satisfy the selection rule \eqref{Selection:PBC}, the overlap $\langle\mathbf{u}|\Psi_0\rangle$ can be expressed as a Gaudin‑type determinant \cite{Pozsgay_2014, Brockmann_2014}.
\paragraph{Boundary state overlaps}
For the overlap between integrable boundary states and Bethe state, denote
\begin{equation}
\begin{split}
    G^+_{ik}&=-\delta_{ik}\Big(L\mathcal{K}_{1/2}(u_j)+\sum_{j=1}^{M/2}\big(\mathcal{K}_{-1}(u_i-u_j)+\mathcal{K}_{-1}(u_i+u_j)\big)\Big)+(\mathcal{K}_{-1}(u_i-u_k)+\mathcal{K}_{-1}(u_i+u_k)),\\
    G^{-}_{ik}&=-\delta_{ik}\Big(L\mathcal{K}_{1/2}(u_j)+\sum_{j=1}^{M/2}\big(\mathcal{K}_{-1}(u_i-u_j)+\mathcal{K}_{-1}(u_i+u_j)\big)\Big)+(\mathcal{K}_{-1}(u_i-u_k)-\mathcal{K}_{-1}(u_i+u_k)),
\label{gaulike_det}
\end{split}
\end{equation}
where $i,k=1,\dots,M/2$. Choosing the following parametrization of the boundary $K$-matrix
\begin{equation}
\begin{split}
    \alpha_1=\sinh 2u \cosh \theta,&\quad 
    \alpha_2=i\sinh 2u \sinh \theta\\
    \alpha_3=2\sinh 2u \cosh\alpha \sinh\beta,&\quad 
    \alpha_4=2\sinh 2u\sinh\alpha\cosh\beta,
\end{split}    
\end{equation}
then the overlap between integrable boundary states (\ref{intxxzp}) and Bethe states take the following form\cite{Pozsgay_2018}
\begin{equation}
    \frac{\langle\mathbf{u}\lvert \Psi_0\rangle}{\sqrt{\langle\mathbf{u}\lvert\mathbf{u}\rangle}}=\lvert \ee^{\theta (L-2M)}\lvert^{1/2}(\sinh\eta)^{L/2}\prod_{j=1}^{M/2}\big(y_l(\alpha,\beta,u_j)\big)^{1/2}\sqrt{\frac{\det G^{+}}{\det G^{-}}}
\end{equation}
where 
\begin{equation}
y_l(\alpha,\beta,u)=\frac{z_s(\alpha,u)z_s(\alpha^*,u)z_c(\beta,u)z_c(\beta^*,u)}{z_s(\eta/2,u)z_s(0,u)z_c(\eta/2,u)z_c(0,u)},
\label{yl_def}
\end{equation}
with
\begin{equation}
    z_s(\kappa,\lambda)=\sinh(\lambda+\kappa)\sinh(\lambda-\kappa),\quad z_c(\kappa,\lambda)=\cosh(\lambda+\kappa)\cosh(\lambda-\kappa). 
\end{equation}


\subsection{Generalized boundary states}

In this paper, we aim to generalize the concept of an integrable boundary state in Eq. (\ref{intconI}). The generalization manifests in the following aspects:
(1) the constraint for the boundary state can be extended from $t(u)|\Psi\rangle = t(-u)|\Psi\rangle$ to $t(u)|\Psi\rangle = \pm t(-u)|\Psi\rangle$;
(2) boundary states can also exist in systems with an odd number of sites. For convenience, we adopt the following definition for the extended boundary state:
\begin{align}
\label{intdef}
t(u) |\Psi_{\pm, s}\rangle = \pm t(-u) |\Psi_{\pm, s}\rangle, \quad s = e, o,
\end{align}
where $e$ and $o$ denote systems with an even and odd total number of sites, respectively.

In Section \ref{sec:xxz:bs}, we have demonstrated the existence of $|\Psi_{+,e}\rangle$ (see Eq. (\ref{intxxzp})) in periodic XXZ chain. We now turn to the investigation of other possible extended boundary states in the periodic XXZ chain.
\begin{theorem}\label{thm1}
For a periodic XXZ chain with an odd $L$, we can construct the following factorized integrable boundary state
\begin{equation}
\lvert\Psi_{-,o}\rangle=\begin{pmatrix}1\\a\end{pmatrix}^{\otimes L}
,\quad a\in \mathbb{C}.\label{Boundary:State:Twist}
\end{equation}
\end{theorem}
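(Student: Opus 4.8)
The plan is to exploit the product (factorized) structure of $|\Psi_{-,o}\rangle=\ket{v}^{\otimes L}$ with $\ket{v}=\begin{pmatrix}1\\a\end{pmatrix}$, together with the crossing relation $\mathcal{L}_{0,j}(u)=-\sigma_j^y\mathcal{L}_{0,j}^{\mathrm{t}}(-u)\sigma_j^y$, and to reduce everything to a two-dimensional computation in the auxiliary space. Since $|\Psi_{-,o}\rangle$ is a product state, each $\mathcal{L}_{0,j}(u)$ in the monodromy matrix acts only on its own physical site, so $t(u)|\Psi_{-,o}\rangle=\mathrm{Tr}_0\big[\hat M_L(u)\cdots\hat M_1(u)\big]$, where $\hat M_j(u)=\mathcal{L}_{0,j}(u)\ket{v}_j$ is a $2\times2$ matrix in the auxiliary space whose entries are vectors in the physical space $j$ (a matrix-product-state representation with a site-independent local tensor). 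First I would write $\hat M(u)=\hat M^{(0)}(u)\,\ket{0}+\hat M^{(1)}(u)\,\ket{1}$ and record the two coefficient matrices $\hat M^{(0)}(u)$ and $\hat M^{(1)}(u)$ explicitly from \eqref{Rmatxxz}.

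The key step is a local identity obtained from the crossing relation. Replacing $u\to-u$ in it and acting on $\ket{v}$ produces a relation between $\hat M(-u)$ and the auxiliary-space transpose of $\hat M(u)$; concretely I expect to find the constant matrix $V=\ii\sigma^y$ in the auxiliary space such that $\hat M^{(i)}(-u)=-V\,[\hat M^{(i)}(u)]^{\mathrm{t}}\,V^{-1}$ holds for both $i=0,1$ simultaneously (here $\mathrm{t}$ is transposition in the auxiliary space). The content of this identity is exactly the replacement $\sinh(u\pm\tfrac{\eta}{2})\to-\sinh(u\mp\tfrac{\eta}{2})$ induced by $u\to-u$, and its verification is a short $2\times2$ calculation.

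Next I would substitute this into $t(-u)|\Psi_{-,o}\rangle=\mathrm{Tr}_0[\hat M_L(-u)\cdots\hat M_1(-u)]$. The conjugations by $V$ telescope and cancel under the cyclic trace, each of the $L$ factors contributes one minus sign so that an overall $(-1)^L$ appears, and the auxiliary transpose reverses the order of the product inside the trace. Using $\mathrm{Tr}(X^{\mathrm{t}})=\mathrm{Tr}(X)$, this rewrites $t(-u)|\Psi_{-,o}\rangle$ as $(-1)^L$ times the parity-reflected wavefunction of $t(u)|\Psi_{-,o}\rangle$. Finally I would invoke the spatial-reflection (parity) invariance of the periodic homogeneous transfer matrix—which holds because the $R$-matrix in \eqref{Rmatxxz} is symmetric under exchange of its two spaces—together with $P|\Psi_{-,o}\rangle=|\Psi_{-,o}\rangle$ to identify the reflected wavefunction with the original one. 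This yields $t(-u)|\Psi_{-,o}\rangle=(-1)^L\,t(u)|\Psi_{-,o}\rangle$, which for odd $L$ is precisely $t(u)|\Psi_{-,o}\rangle=-t(-u)|\Psi_{-,o}\rangle$.

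The main obstacle is pinning down the local identity of the second step with the correct sign and transpose structure: a naive attempt to relate $\hat M(-u)$ to $\hat M(u)$ by a pure similarity transformation in the auxiliary space fails, since the relevant coefficient matrices are triangular and no conjugation can simultaneously swap the diagonal and fix the off-diagonal, so it is the transposition forced by the crossing relation—and the consequent order reversal—that makes the parity argument unavoidable. Equivalently, one can phrase the whole computation in terms of the component identities of the $\mathrm{KT}$-relation relating the entries of $T(u)$ and $T(-u)$ (Appendix~\ref{Appendix_D}), the route emphasized in the paper; in either formulation it is precisely the transpose/parity bookkeeping that selects the ``$-$'' branch and the odd parity of $L$.
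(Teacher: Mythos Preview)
Your local identity $\hat M^{(i)}(-u)=-V[\hat M^{(i)}(u)]^{\mathrm t}V^{-1}$ with $V=\ii\sigma^y$ in the auxiliary space is correct (and is indeed the auxiliary-space form of the crossing relation). The trouble is the parity step. The periodic XXZ transfer matrix does \emph{not} commute with spatial reflection: already at $u=\eta/2$ one has $t(\eta/2)=U$ the one-site shift, and $P\,U\,P^{-1}=U^{-1}\neq U$ for $L>2$, so $[P,t(u)]\neq 0$. In fact your local identity, when assembled over all sites, is nothing but the operator relation $P\,t(u)\,P^{-1}=(-1)^{L}t(-u)$: using $[\Sigma^y,t(u)]=0$ with $\Sigma^y=\prod_j\sigma_j^y$ and the crossing of each $\mathcal L_{0,j}$ gives exactly this. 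Hence your intermediate result $t(-u)\lvert\Psi\rangle=(-1)^{L}P\,t(u)\lvert\Psi\rangle$ is just that operator identity evaluated on a $P$-invariant state and contains no information specific to $\lvert\Psi\rangle$; closing the argument with the (false) assumption $[P,t(u)]=0$ makes it circular, returning $t(-u)\lvert\Psi\rangle=t(-u)\lvert\Psi\rangle$.

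What the paper actually uses is different in kind, not just in bookkeeping. It peels off one site so that the remaining $L-1$ sites form a two-site product state $\lvert\Phi_0\rangle$, identifies the specific $K$-matrix parameters \eqref{Kmatsol_1} for which $\lvert\Phi_0\rangle$ is the two-site boundary state, and then invokes the component $\mathrm{KT}$-identities \eqref{TT:Relation} relating $\mathfrak A(\pm u),\mathfrak B(\pm u),\mathfrak C(\pm u),\mathfrak D(\pm u)$ on $\lvert\Phi_0\rangle$. Those relations depend on the $K$-matrix (hence on the particular state) and are \emph{not} implied by crossing/parity alone; they are the extra input that distinguishes $\lvert\Psi_{-,o}\rangle$ from a generic parity-invariant product state and makes the proof go through. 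Your transpose/parity manipulation cannot substitute for this step.
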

\begin{proof}
Let us define 
\begin{align}
\mathcal{L}_{0,L}(u)\cdots \mathcal{L}_{0,2}(u)=\begin{pmatrix}\mathfrak{A}(u)&\mathfrak{B}(u)\\\mathfrak{C}(u)&\mathfrak{D}(u)\end{pmatrix},\quad 
\mathcal{L}_{0,1}(u)=\begin{pmatrix}\mathfrak{a}(u)&\mathfrak{b}(u)\\\mathfrak{c}(u)&\mathfrak{d}(u)\end{pmatrix}.
\end{align}
The transfer matrix defined in Eq. \eqref{Transfer:PBC} can be expressed as
\begin{align}
t(u)=\mathfrak{A}(u)\mathfrak{a}(u)+\mathfrak{B}(u)\mathfrak{c}(u)+\mathfrak{C}(u)\mathfrak{b}(u)+\mathfrak{D}(u)\mathfrak{d}(u).
\end{align}
Acting the transfer matrix on the state $|\Psi_{-,o}\rangle$, one can derive 
\begin{align}
t(\pm u)|\Psi_{-,o}\rangle=\left(
\begin{array}{c}
\left[\frac{\sinh(\pm u+\frac{\eta}{2})}{\sinh\eta}\ \mathfrak{A}(\pm u)+a\mathfrak{B}(\pm u)+\frac{\sinh(\pm u-\frac{\eta}{2})}{\sinh\eta}\mathfrak{D}(\pm u)\right]\lvert\Phi_{0}\rangle\\ [6pt]
\left[(a\frac{\sinh(\pm u-\frac{\eta}{2})}{\sinh\eta}\mathfrak{A}(\pm u)+\mathfrak{C}(\pm u)+a\frac{\sinh(\pm u+\frac{\eta}{2})}{\sinh\eta}\mathfrak{D}(\pm u)\right]\lvert\Phi_{0}\rangle
\end{array}
\right),
\end{align}
where 
\begin{align}
\lvert\Phi_{0}\rangle=\begin{pmatrix}1\\a\end{pmatrix}_2\otimes\begin{pmatrix}1\\a\end{pmatrix}_3\otimes\dots\otimes\begin{pmatrix}1\\a\end{pmatrix}_{L-1}\otimes\begin{pmatrix}1\\a\end{pmatrix}_L.\label{Def:Psi0p}
\end{align}
We see that the state $|\Phi_0\rangle$ defined in Eq. (\ref{Def:Psi0p}) can be seen as a two-site boundary state, which can be obtained from Eq. (\ref{two:site:state}) by setting the $K$-matrix parameters to the following specific values
\begin{align}
\alpha_1=1,\quad \alpha_2=-\frac{\ii (a^2+1) k_2(\frac{\eta }{2})}{(a^2-1)k_1(\frac{\eta}{2})},\quad \alpha_3=\frac{2ak_3(\frac{\eta}{2})}{(a^2-1)k_1(\frac{\eta}{2})},\quad \alpha_4=0.
\label{Kmatsol_1}
\end{align}
The operators $\mathfrak{A}(u)$, $\mathfrak{B}(u)$, $\mathfrak{C}(u)$, $\mathfrak{D}(u)$ and the state $|\Phi_0\rangle$ now satisfy Eq. (\ref{TT:Relation}).
Using Eqs. (\ref{Kmatsol_1}) , (\ref{TT:Relation}) and the following identities 
\begin{align}
&\frac{\sinh(u\pm\tfrac{\eta}{2})}{\sinh\eta}=\frac{1}{2}\left[\frac{k_4(u)}{k_4(\frac{\eta }{2})}\pm \frac{k_3(u)}{k_3(\frac{\eta }{2})}\right],\\
k_1(u)=k_2&(u)=1,\quad k_3(-u)=k_3(u),\quad 
k_4(-u)=-k_4(u),
\end{align}
we can finally prove that 
\begin{align}
t(u) \lvert\Psi_{-,o}\rangle = -t(-u) \lvert\Psi_{-,o}\rangle.
\end{align}
\end{proof}
A nonzero overlap between $\lvert\Psi_{-,o}\rangle$ and Bethe state $|\mathbf{u}\rangle$ requires $\Lambda(u)=-\Lambda(-u)$,  implying the following selection rule for the Bethe roots ($L$ is odd)
\begin{equation}
\{u_1,\dots,u_M\}=\{-u_1+\ii\pi l_1,\dots,-u_M+\ii\pi l_M\},\quad l_i\in\mathbb{Z}. \label{Selection:PBCO}
\end{equation}
The overlap between $\lvert\Psi_{-,o}\rangle$ and Bethe state yields 
\begin{equation}
    \frac{\langle\mathbf{u}\lvert \Psi_{-,o}\rangle}{\sqrt{\langle\mathbf{u}\lvert\mathbf{u}\rangle}}=a^{2M}\Big(\prod_{j=1}^{M/2}\big(\tanh^2 u_j\tanh(u_j+\eta/2)\tanh(u_j-\eta/2)\big)\Big)\sqrt{\frac{\det G^{+}}{\det G^{-}}},
\end{equation}
where $G^{+}, G^-$ are Gaudin like determinant in Eq. (\ref{gaulike_det}).

We have shown that the boundary states $|\Psi_{+,e}\rangle$ and $|\Psi_{-,o}\rangle$ exist in the periodic XXZ chain. 
A natural question is whether one can also construct $|\Psi_{-,e}\rangle$ and $|\Psi_{+,o}\rangle$.
From the $T$-$Q$ relation Eq. (\ref{TQ:Periodic}),  we can get
\begin{align}
\begin{aligned}\label{Leading:Term:PBC}
\lim_{u\to +\infty}\frac{\Lambda(u)}{(2\sinh\eta)^{-L}\ee^{L u}}&=2\cosh\left((\tfrac{L}{2}-M\right)\eta),\\
\lim_{u\to +\infty}\frac{\Lambda(-u)}{(2\sinh\eta)^{-L}\ee^{L u}}&=2(-1)^L\cosh\left((\tfrac{L}{2}-M)\eta\right).
\end{aligned}
\end{align}
For an even $L$, it follows directly from Eq. (\ref{Leading:Term:PBC}) that the condition $\Lambda(u)+\Lambda(-u)=0$ can not be satisfied. 
If a state $|\Psi_{-,e}\rangle$ nevertheless existed, its overlap with every eigenstate of the transfer matrix $t(u)$ would have to vanish, forcing $|\Psi_{-,e}\rangle$ to be the trivial state.
Similarly, for a periodic system with odd $L$, no nontrivial $|\Psi_{+,o}\rangle$ can exist.

\section{Integrable boundary states in the twisted XXZ chain}
\label{sec3}
It is well known that the XXZ model with a twisted boundary remains integrable. Let $G$ denote the operator that implements the twisted boundary. When the matrix entries of $G$ satisfy
\begin{align}
G_{11}G_{12}=G_{11}G_{21}=G_{22}G_{12}=G_{22}G_{21}=0,
\label{twRGG}
\end{align}
the $R$-matrix in Eq. (\ref{Rmatxxz}) satisfies \begin{align}[R(u),\,G\otimes G]=0.\label{RGG}
\end{align}

Define the transfer matrix 
\begin{align}
t(u)={\rm Tr}_0(G_0T_0(u))\equiv {\rm Tr}_0(T_0(u)G_0),\label{twist:transfer}
\end{align}
where $T_0(u)$ is given by Eq. \eqref{mon:matrix:xxz}. Using the RLL relation \eqref{RLL} together with Eq. \eqref{RGG}, we can prove the commutation relation $[t(u),\,t(v)]=0$. This demonstrates that the twist preserves the integrability of the system.

In Ref. \cite{Gombor_2021}, the authors generalized the KT relation and presented an approach to construct a two-site integrable boundary state $\lvert\Psi_{+,e}\rangle$ in a spin chain with twisted boundary conditions (for even $L$). The key idea of the construction is that the following two-site state
\begin{align}
\lvert \Psi_0\rangle=\lvert\psi_0\rangle_{1,2}^{\otimes L/2},\quad 
\lvert\psi_0\rangle_{1,2}=\sum_{i,j=1}^2[K(\tfrac{\eta}{2})\sigma^y]_{ij}\lvert i\rangle_1\otimes\lvert j\rangle_2,\label{two:site:state:twist}
\end{align}
satisfies the integrable boundary condition
\begin{equation}
 [t(u)-t(-u)]\lvert\Psi_{0}\rangle=0, \label{intypei}
\end{equation}
provided that the twist matrix $G$ and the $K$-matrix obey the compatibility relation
\begin{equation}
\sigma^y K^{\mathrm{t}}(-u)\sigma^yG=G\sigma^y K^{\mathrm{t}}(-u)\sigma^y. \label{KTcompatI}
\end{equation}
We realize that this method remains applicable for constructing a two-site integrable boundary state $\lvert\Psi_{-,e}\rangle$ by replacing \eqref{KTcompatI} with 
\begin{equation}
\sigma^y K^{\mathrm{t}}(-u)\sigma^yG=-G\sigma^y K^{\mathrm{t}}(-u)\sigma^y. \label{KTcompatII}
\end{equation}

\subsection{Construction of \texorpdfstring{$|{\Psi_{+,e}}\rangle$}{Psi\_{+,e}}}\label{sec:+e}

The compatibility relation \eqref{KTcompatI} gives 
\begin{align}
&\left(
\begin{array}{cc}
G_{11} k_{22}(-u)-G_{21} k_{12}(-u) & G_{12} k_{22}(-u)-G_{22} k_{12}(-u) \\
 G_{21} k_{11}(-u)-G_{11} k_{21}(-u) & G_{22} k_{11}(-u)-G_{12} k_{21}(-u) \\
\end{array}
\right)\no\\
=&\left(
\begin{array}{cc}
 G_{11} k_{22}(-u)-G_{12} k_{21}(-u) & G_{12} k_{11}(-u)-G_{11} k_{12}(-u) \\
 G_{21} k_{22}(-u)-G_{22} k_{21}(-u) & G_{22} k_{11}(-u)-G_{21} k_{12}(-u) \\
\end{array}
\right).\label{Constraint:plus}
\end{align}
Since Eq. \eqref{Constraint:plus} must hold for arbitrary $u$, with the help of the expression of $k_{ij}(u)$ in Eq. \eqref{K:XXZ}, one can get the following equations
\begin{align}
\begin{aligned}\label{Eq:plus}
&G_{21}k_{12}(-u)=G_{12}k_{21}(-u),\quad G_{12}[k_{22}(-u)-k_{11}(-u)]=0, \\
&k_{12}(-u)(G_{11}-G_{22})=0,\quad 
G_{21}[k_{11}(-u)-k_{22}(-u)]=0,\\
&k_{21}(-u)(G_{11}-G_{22})=0.
\end{aligned}
\end{align}
In the case $G_{12}=G_{21}=0$ and $G_{11},G_{22}\neq 0$, we get 
\begin{align}
k_{12}(-u)=k_{21}(-u)=0.
\end{align}
Therefore, the local two-site state reads

\begin{align}
\lvert\psi_0\rangle_{1,2}
\propto\left(\begin{array}{cccc}
0&
-k_{11}(\frac{\eta}{2})&
k_{22}(\frac{\eta}{2})&
0
\end{array}\right)^{\mathrm{t}},
\end{align}
with $\mathrm{t}$ denoting a transpose. 
Now we consider the case $G_{11}=G_{22}=0$ and $G_{12},G_{21}\neq 0$, we get  
\begin{align}
k_{11}(-u)=k_{22}(-u),\quad k_{21}(-u)=\frac{G_{21}}{G_{12}}k_{12}(-u).
\end{align}
The local two-site state thus reads

\begin{align}
\lvert\psi_0\rangle_{1,2}\propto\left(\begin{array}{cccc}
k_{12}(\frac{\eta}{2})&
-k_{11}(\frac{\eta}{2})&
k_{11}(\frac{\eta}{2})&
-\frac{G_{21}}{G_{12}}k_{12}(\frac{\eta}{2})
\end{array}\right)^{\mathrm{t}}.
\end{align}
Let us study another case with $G_{11}=G_{21}=0$ and $G_{12},G_{22}\neq 0$. From Eq. \eqref{Eq:plus}, we obtain 
\begin{align}
k_{12}(-u)=k_{21}(-u)=0,\quad k_{11}(-u)=k_{22}(-u).
\end{align}
The corresponding local two-site state becomes

\begin{align}
\lvert\psi_0\rangle_{1,2}\propto\left(\begin{array}{cccc}
0&
-k_{11}(\frac{\eta}{2})&
k_{11}(\frac{\eta}{2})&
0
\end{array}\right)^{\mathrm{t}}.
\end{align}

\subsection{Construction of \texorpdfstring{$|{\Psi_{-,e}}\rangle$}{Psi\_{-,e}}}\label{sec:-e}

The compatibility relation \eqref{KTcompatII} leads to
\begin{align}
&\left(
\begin{array}{cc}
 G_{11} k_{22}(-u)-G_{21} k_{12}(-u) & G_{12} k_{22}(-u)-G_{22} k_{12}(-u) \\
 G_{21} k_{11}(-u)-G_{11} k_{21}(-u) & G_{22} k_{11}(-u)-G_{12} k_{21}(-u) \\
\end{array}
\right)\no\\
=&\left(
\begin{array}{cc}
 -G_{11} k_{22}(-u)+G_{12} k_{21}(-u) & -G_{12} k_{11}(-u)+G_{11} k_{12}(-u) \\
 -G_{21} k_{22}(-u)+G_{22} k_{21}(-u) & -G_{22} k_{11}(-u)+G_{21} k_{12}(-u) \\
\end{array}
\right).\label{Constraint:plus_2}
\end{align}
One can easily see that the above equation only has non-trivial solutions in the following cases: (i): $G_{11}=G_{22}=0$, $G_{12},G_{21}\ne 0$; (ii): $G=\sigma^z$. 
In the case $G_{11}=G_{22}=0$, we can get 
\begin{equation}
    k_{12}(-u)=-\frac{G_{12}}{G_{21}}k_{21}(-u),\quad 
    k_{11}(-u)=-k_{22}(-u),
\end{equation}
and the local two-site state then reads
\begin{equation}
    \lvert \psi_{0}\rangle_{1,2}\propto \left(\begin{array}{cccc}
k_{12}(\frac{\eta}{2})&
-k_{11}(\frac{\eta}{2})&
-k_{11}(\frac{\eta}{2})&
\frac{G_{21}}{G_{12}}k_{12}(\frac{\eta}{2})
\end{array}\right)^{t}.   
\end{equation}
Once $G=\sigma^z$, we obtain 
\begin{equation}
    k_{11}(-u)=k_{22}(-u)=0,
\end{equation}
and the local two-site state reads

\begin{equation}
    \lvert \psi_{0}\rangle_{1,2}\propto \left(\begin{array}{cccc}
k_{12}(\frac{\eta}{2})&
0&
0&
-k_{21}(\frac{\eta}{2})
\end{array}\right)^{t}.    
\end{equation}

An interesting observation is that once $G=\sigma^\alpha$, both $\lvert\Psi_{+,e}\rangle$ and $\lvert\Psi_{-,e}\rangle$ exist. This fact can be summarized in the following theorem.

\begin{theorem}\label{thm2}
Once $G=\sigma^\alpha,\,\,\alpha=x,y,z$, we can construct the following two-sites integrable boundary states with an even $L$
\begin{align}
|\Psi_{+,e}\rangle&=\left(\sum\limits_{\sigma=\pm}a_\sigma\,\lvert\varphi_{\alpha,\sigma}\rangle\otimes\lvert \varphi_{\alpha,-\sigma}\rangle\right)^{\otimes L/2},\\
|\Psi_{-,e}\rangle&=\left(\sum\limits_{\sigma=\pm}a_\sigma\,\lvert\varphi_{\alpha,\sigma}\rangle\otimes\lvert \varphi_{\alpha,\sigma}\rangle\right)^{\otimes L/2},
\end{align}
where $\sigma^{\alpha}\lvert\varphi_{\alpha,\pm}\rangle=\pm\lvert\varphi_{\alpha,\pm}\rangle$ and $a_\pm \in\mathbb{C}$.
\end{theorem}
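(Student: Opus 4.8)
The plan is to use the crossing symmetry of the $K$-matrix to turn the compatibility relations \eqref{KTcompatI} and \eqref{KTcompatII} into simple (anti)commutation relations with $\sigma^\alpha$, and then to identify the resulting two-site states through their behaviour under $\sigma^\alpha\otimes\sigma^\alpha$. First I would record the crossing identity $\sigma^y K^{\mathrm{t}}(-u)\sigma^y=-K(u)$, which follows at once from \eqref{K:XXZ} using the parities $k_{1,2}(-u)=k_{1,2}(u)$, $k_3(-u)=k_3(u)$, $k_4(-u)=-k_4(u)$ together with $(\sigma^{x})^{\mathrm{t}}=\sigma^x$, $(\sigma^{y})^{\mathrm{t}}=-\sigma^y$, $(\sigma^{z})^{\mathrm{t}}=\sigma^z$. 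Substituting into \eqref{KTcompatI}--\eqref{KTcompatII} with $G=\sigma^\alpha$, the two branches collapse to $[K(u),\sigma^\alpha]=0$ (the ``$+$'' branch) and $\{K(u),\sigma^\alpha\}=0$ (the ``$-$'' branch).

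For existence I would expand $K(u)$ in the Pauli basis as in \eqref{K:XXZ}. Requiring $[K(u),\sigma^\alpha]=0$ for all $u$ forces $K$ to be a combination of $\mathbb{I}$ and $\sigma^\alpha$, whereas $\{K(u),\sigma^\alpha\}=0$ forces $K$ to be a combination of the two Pauli matrices orthogonal to $\sigma^\alpha$; in every case $\alpha=x,y,z$ this is compatible with the admissible $u$-dependence of \eqref{K:XXZ} and leaves exactly two free boundary parameters, so a nontrivial compatible $K$ always exists. Since $G=\sigma^\alpha$ also satisfies \eqref{twRGG}, the twisted transfer matrix \eqref{twist:transfer} is well defined and integrable, and the construction recalled around \eqref{two:site:state:twist} guarantees that $\lvert\psi_0\rangle_{1,2}^{\otimes L/2}$ is a boundary state of the corresponding ``$\pm$'' type.

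The heart of the argument is to fix the form of $\lvert\psi_0\rangle_{1,2}$. Writing $M=K(\tfrac{\eta}{2})\sigma^y$, the state \eqref{two:site:state:twist} is the vectorization $\lvert\psi_0\rangle_{1,2}=(M\otimes\mathbb{I})\sum_k\lvert k\rangle\otimes\lvert k\rangle$, and using $(\mathbb{I}\otimes A)\sum_k\lvert k\rangle\otimes\lvert k\rangle=(A^{\mathrm{t}}\otimes\mathbb{I})\sum_k\lvert k\rangle\otimes\lvert k\rangle$ one gets $(\sigma^\alpha\otimes\sigma^\alpha)\lvert\psi_0\rangle_{1,2}=(\sigma^\alpha M(\sigma^\alpha)^{\mathrm{t}}\otimes\mathbb{I})\sum_k\lvert k\rangle\otimes\lvert k\rangle$. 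The key simplification is the identity $\sigma^\alpha\sigma^y(\sigma^\alpha)^{\mathrm{t}}=-\sigma^y$, valid uniformly for $\alpha=x,y,z$; combined with $[K,\sigma^\alpha]=0$ (resp. $\{K,\sigma^\alpha\}=0$) it yields $\sigma^\alpha M(\sigma^\alpha)^{\mathrm{t}}=\mp M$, hence $(\sigma^\alpha\otimes\sigma^\alpha)\lvert\psi_0\rangle_{1,2}=\mp\lvert\psi_0\rangle_{1,2}$ in the ``$\pm$'' branch. The $-1$ eigenspace of $\sigma^\alpha\otimes\sigma^\alpha$ is spanned by $\lvert\varphi_{\alpha,+}\rangle\otimes\lvert\varphi_{\alpha,-}\rangle$ and $\lvert\varphi_{\alpha,-}\rangle\otimes\lvert\varphi_{\alpha,+}\rangle$, and the $+1$ eigenspace by $\lvert\varphi_{\alpha,\pm}\rangle\otimes\lvert\varphi_{\alpha,\pm}\rangle$, which are exactly the anti-aligned and aligned forms in the statement.

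Finally, the two admissible boundary parameters span a two-dimensional space of matrices $M$, and since vectorization is injective and lands inside the two-dimensional $\mp 1$ eigenspace it must be onto it; therefore the coefficients $(a_+,a_-)$ range over all of $\mathbb{C}^2$, matching the statement. I expect the only genuine obstacle to be handling the three choices $\alpha=x,y,z$ uniformly: the invariant characterization through the $\sigma^\alpha\otimes\sigma^\alpha$ eigenvalue, resting on $\sigma^\alpha\sigma^y(\sigma^\alpha)^{\mathrm{t}}=-\sigma^y$, is what bypasses the separate diagonal and off-diagonal case analyses carried out in Sections \ref{sec:+e} and \ref{sec:-e}.
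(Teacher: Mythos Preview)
Your argument is correct and follows a genuinely different route from the paper. The paper never writes a dedicated proof of Theorem~\ref{thm2}; instead it extracts the statement from the case-by-case analyses of Sections~\ref{sec:+e} and~\ref{sec:-e}, where the compatibility conditions \eqref{KTcompatI}--\eqref{KTcompatII} are solved componentwise for diagonal and off-diagonal $G$ separately, yielding explicit restrictions on the entries $k_{ij}(-u)$ and hence explicit local two-site vectors written in the computational basis. You bypass this entirely: the crossing identity $\sigma^y K^{\mathrm{t}}(-u)\sigma^y=-K(u)$ turns the two compatibility conditions into the clean (anti)commutator conditions $[K(u),\sigma^\alpha]=0$ and $\{K(u),\sigma^\alpha\}=0$, and the vectorization trick combined with $\sigma^\alpha\sigma^y(\sigma^\alpha)^{\mathrm{t}}=-\sigma^y$ identifies the resulting $|\psi_0\rangle_{1,2}$ directly as an eigenvector of $\sigma^\alpha\otimes\sigma^\alpha$ with eigenvalue $\mp1$, which is exactly the aligned/anti-aligned dichotomy. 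What your approach buys is uniformity in $\alpha$ and a structural explanation of why the two-site states look the way they do; what the paper's approach buys is that it simultaneously handles the more general twists (e.g.\ $G_{11}=G_{22}=0$ with $G_{12}\neq G_{21}$ or $G_{11}=G_{21}=0$) that are not Pauli matrices, at the cost of producing formulas whose invariant meaning is less transparent.
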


\subsection{Construction of \texorpdfstring{$|{\Psi_{+,o}}\rangle$}{Psi\_{+,o}}}

In case of $G=\sigma^\alpha$, we can also construct $|{\Psi_{+,o}}\rangle$. 
The corresponding result can be summarized in the following theorem.
\begin{theorem}\label{thm3}
Once $G=\sigma^\alpha,\,\,\alpha=x,y,z$, we can construct the following factorized integrable boundary states with an odd $L$
\begin{align}
|\Psi_{+,o}\rangle&=\prod\limits_{{\rm even}\, n}\sigma_n^\alpha\,{\begin{pmatrix}
    1\\
    a
\end{pmatrix}}^{\otimes L},\quad a\in\mathbb{C}.
\end{align}
\end{theorem}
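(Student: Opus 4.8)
The plan is to reduce this twisted, odd-length statement to a uniform, essentially untwisted computation on the simple product state $|\phi\rangle=\begin{pmatrix}1\\a\end{pmatrix}^{\otimes L}$ used in Theorem \ref{thm1}, and then exploit the crossing relation. First I would write $|\Psi_{+,o}\rangle=U\,|\phi\rangle$ with $U=\prod_{\mathrm{even}\,n}\sigma_n^\alpha$. Since $U$ acts only on the physical spaces while the twist factor $\sigma_0^\alpha$ in the transfer matrix \eqref{twist:transfer} acts on the auxiliary space, they commute, and
\[
t(u)\,|\Psi_{+,o}\rangle=U\,\mathrm{Tr}_0\!\bigl(\sigma_0^\alpha\,U^{-1}T_0(u)\,U\bigr)|\phi\rangle .
\]
The conjugated monodromy $U^{-1}T_0(u)U$ differs from $T_0(u)$ only by replacing each even-site Lax operator $\mathcal{L}_{0,n}(u)$ with $\sigma_n^\alpha\mathcal{L}_{0,n}(u)\sigma_n^\alpha$.

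The key simplification is that the commutation $[R(u),\sigma^\alpha\otimes\sigma^\alpha]=0$ of Eq.~\eqref{RGG} gives, for the Lax operator, $\sigma_n^\alpha\mathcal{L}_{0,n}(u)\sigma_n^\alpha=\sigma_0^\alpha\mathcal{L}_{0,n}(u)\sigma_0^\alpha$: a physical-space conjugation can be traded for an auxiliary-space one. Feeding this into the ordered product and using that $L$ is \emph{odd} — so that the two end sites $1$ and $L$ are both odd and the twist factor $\sigma_0^\alpha$ pairs up telescopically with the conjugators of the even sites — the interspersed $\sigma_0^\alpha$'s rearrange into exactly one factor in front of every Lax operator,
\[
\sigma_0^\alpha\,U^{-1}T_0(u)\,U=\prod_{n=L}^{1}\bigl(\sigma_0^\alpha\mathcal{L}_{0,n}(u)\bigr).
\]
Here oddness is essential; for even $L$ the leading factor is absent and this clean form fails, which is exactly why the sign in \eqref{intdef} differs between the even and odd cases. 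It thus suffices to prove that the dressed transfer matrix $s(u):=\mathrm{Tr}_0\prod_{n=L}^{1}(\sigma_0^\alpha\mathcal{L}_{0,n}(u))$ acts evenly in $u$ on the uniform state, $s(u)|\phi\rangle=s(-u)|\phi\rangle$.

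To establish this I would pass to a matrix-product form: writing $\sigma_0^\alpha\mathcal{L}_{0,n}(u)\begin{pmatrix}1\\a\end{pmatrix}=\sum_{s}A^{s}(u)\,|s\rangle_n$ with two $2\times2$ auxiliary matrices $A^0(u),A^1(u)$ that are identical for every site, one gets $s(u)|\phi\rangle=\sum_{\{s\}}\mathrm{Tr}\bigl(A^{s_L}(u)\cdots A^{s_1}(u)\bigr)\bigotimes_n|s_n\rangle$. The crossing relation $\mathcal{L}_{0,n}(u)=-\sigma_n^y\mathcal{L}_{0,n}^{\mathrm{t}}(-u)\sigma_n^y$ translates into a relation $A^{s}(-u)=\Sigma\,(A^{s}(u))^{\mathrm{t}}\,\Sigma^{-1}$ with a fixed involution $\Sigma$ in auxiliary space (one finds $\Sigma=\sigma^x$ for $\alpha=z$). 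Because $\Sigma$ is the same for $s=0,1$, the coefficient of $\bigotimes_n|s_n\rangle$ in $s(-u)|\phi\rangle$ equals $\mathrm{Tr}\bigl(A^{s_1}(u)\cdots A^{s_L}(u)\bigr)$, the same product of single-site matrices in reversed order. Evenness in $u$ therefore reduces to the reversal invariance $\mathrm{Tr}(A^{s_L}\cdots A^{s_1})=\mathrm{Tr}(A^{s_1}\cdots A^{s_L})$ for every word.

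The main obstacle — and the only genuine computation — is this reversal invariance. I would prove it by exhibiting a single invertible $C(u)$ that simultaneously conjugates both single-site matrices to their transposes, $C A^{s} C^{-1}=(A^{s})^{\mathrm{t}}$ for $s=0,1$; given such a $C$, the trace of any word in the $A^{s}$ is unchanged under reversal. Solving the linear system $C A^{s}=(A^{s})^{\mathrm{t}}C$ for the explicit $A^{s}(u)$ built from the XXZ Lax operator yields a consistent symmetric solution — for $\alpha=z$ one finds $C\propto\begin{pmatrix}1 & a/(p_++p_-)\\ a/(p_++p_-) & a^2\end{pmatrix}$ with $p_\pm=\sinh(u\pm\tfrac{\eta}{2})/\sinh\eta$ — and the remaining directions follow in the same way, with $\alpha=x$ and $\alpha=y$ moreover related by the global spin rotation about the $z$-axis, a symmetry of the XXZ chain. (This common conjugator is precisely the single-site incarnation of the $\mathrm{KT}$-relation identities of Appendix \ref{Appendix_D}.) Assembling the three steps gives $t(u)|\Psi_{+,o}\rangle=t(-u)|\Psi_{+,o}\rangle$, the $+$ branch of \eqref{intdef}. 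I expect the construction of $C$ to be the crux; everything else is bookkeeping of the $\sigma^\alpha$ factors and of the parity of $L$.
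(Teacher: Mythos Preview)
Your argument is correct and takes a genuinely different route from the paper. The paper does not absorb the twist via the gauge transformation $U$; instead it peels off site~$1$, writes $t(u)=\mathfrak{A}\mathfrak{a}+\mathfrak{B}\mathfrak{c}-\mathfrak{C}\mathfrak{b}-\mathfrak{D}\mathfrak{d}$ (for $G=\sigma^z$), observes that the remaining state on sites $2,\dots,L$ is a two-site factorized state of even length, and then invokes the KT identities of Appendix~\ref{Appendix_D} to express $\mathfrak{A}(-u),\dots,\mathfrak{D}(-u)$ acting on that state in terms of $\mathfrak{A}(u),\dots,\mathfrak{D}(u)$; the identity then follows from direct bookkeeping. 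Your approach instead telescopes $\sigma_0^\alpha U^{-1}T_0(u)U$ into a homogeneous product of dressed Lax operators, rewrites the action on $|\phi\rangle$ as an MPS trace, and reduces $u\to-u$ symmetry to reversal invariance of a word in two $2\times2$ matrices, which you establish by constructing a common symmetrizing conjugator $C(u)$. The paper's proof is more closely tied to the integrable machinery (the KT relation and the $K$-matrix), which makes the XYZ generalization immediate and uniform; your proof is more elementary and self-contained, and it makes transparent \emph{why} odd $L$ and the specific insertion pattern $\prod_{\mathrm{even}\,n}\sigma^\alpha_n$ are needed (namely, the telescoping). Your conjugator $C$ is indeed the single-site shadow of the KT identities, so the two proofs ultimately encode the same algebraic content at different levels of aggregation.
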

\begin{proof}
The proof of the Theorem  \ref{thm3} is similar to that demonstrated in Theorem \ref{thm1}. Let us consider the case $G=\sigma^z$ as an example.

Define 
\begin{align}
\mathcal{L}_{0,L}(u)\cdots \mathcal{L}_{0,2}(u)=\begin{pmatrix}\mathfrak{A}(u)&\mathfrak{B}(u)\\\mathfrak{C}(u)&\mathfrak{D}(u)\end{pmatrix},\quad 
\mathcal{L}_{0,1}(u)=\begin{pmatrix}\mathfrak{a}(u)&\mathfrak{b}(u)\\\mathfrak{c}(u)&\mathfrak{d}(u)\end{pmatrix}.
\end{align}
The transfer matrix can be expressed as
\begin{align}
t(u)=\mathfrak{A}(u)\mathfrak{a}(u)+\mathfrak{B}(u)\mathfrak{c}(u)-\mathfrak{C}(u)\mathfrak{b}(u)-\mathfrak{D}(u)\mathfrak{d}(u).
\end{align}
Acting the transfer matrix on the state $|\Psi_{+,o}\rangle$, one can derive 
\begin{align}
t(\pm u)|\Psi_{+,o}\rangle=\left(
\begin{array}{c}
\left[\frac{\sinh(\pm u+\frac{\eta}{2})}{\sinh\eta}\ \mathfrak{A}(\pm u)+a\mathfrak{B}(\pm u)-\frac{\sinh(\pm u-\frac{\eta}{2})}{\sinh\eta}\mathfrak{D}(\pm u)\right]\lvert\Phi_{0}\rangle\\ [6pt]
\left[(a\frac{\sinh(\pm u-\frac{\eta}{2})}{\sinh\eta}\mathfrak{A}(\pm u)-\mathfrak{C}(\pm u)-a\frac{\sinh(\pm u+\frac{\eta}{2})}{\sinh\eta}\mathfrak{D}(\pm u)\right]\lvert\Phi_{0}\rangle
\end{array}
\right),
\end{align}
where 
\begin{align}
\lvert\Phi_{0}\rangle=\begin{pmatrix}1\\-a\end{pmatrix}_2\otimes\begin{pmatrix}1\\a\end{pmatrix}_3\otimes\dots\otimes\begin{pmatrix}1\\-a\end{pmatrix}_{L-1}\otimes\begin{pmatrix}1\\a\end{pmatrix}_L.\label{Def:Psi0}
\end{align}
We see that the state $|{\Phi_0}\rangle$ defined in Eq. (\ref{Def:Psi0}) can be a two-site boundary state, which can be derived from Eq. (\ref{two:site:state}) with 
\begin{align}
\alpha_1=-\frac{\left(a^2+1\right) k_1(\frac{\eta }{2})}{2a k_4(\frac{\eta }{2})},\quad \alpha_2=\frac{\ii (a^2-1) k_2(\frac{\eta }{2})}{2ak_4(\frac{\eta}{2})},\quad \alpha_3=0\quad \alpha_4=1.
\label{Kmatsol}
\end{align}
The operators $\mathfrak{A}(u)$, $\mathfrak{B}(u)$, $\mathfrak{C}(u)$, $\mathfrak{D}(u)$ and the state $|{\Phi_0}\rangle$ now satisfy the same relations as in Eq. (\ref{TT:Relation}).
Using Eqs. (\ref{Kmatsol}) (\ref{TT:Relation}) and the following identities 
\begin{align}
&\frac{\sinh(u\pm\tfrac{\eta}{2})}{\sinh\eta}=\frac{1}{2}\left[\frac{k_4(u)}{k_4(\frac{\eta }{2})}\pm \frac{k_3(u)}{k_3(\frac{\eta }{2})}\right],\no\\
k_1(u)=k_2&(u)=1,\quad k_3(-u)=k_3(u),\quad 
k_4(-u)=-k_4(u),
\end{align}
we can thus prove
\begin{align}
t(u) \lvert\Psi_{+,o}\rangle = t(-u) \lvert\Psi_{+,o}\rangle.
\end{align}
Analogously, the theorem also holds for the case $G=\sigma^{x,y}$.
\end{proof}
It should be noted that $|{\Psi_{-,o}}\rangle$ should \textit{not} exist.
When $G=\sigma^z$, we see the eigenvalue of $t(u)$ has the following properties (see Eq. (\ref{TQsigz}))
\begin{align}
\lim_{u\to +\infty}\frac{\Lambda(u)}{(2\sinh\eta)^{-L}\ee^{L u}}&=2\sinh\left((\tfrac{L}{2}-M)\eta\right),\\
\lim_{u\to +\infty}\frac{\Lambda(-u)}{(2\sinh\eta)^{-L}\ee^{L u}}&=2(-1)^{L+1}\sinh\left((\tfrac{L}{2}-M)\eta\right),\quad M=0,1\ldots,L.
\end{align}
Hence, for an odd system size $L$, the identity $\Lambda(u) + \Lambda(-u) = 0$ cannot be satisfied. Consequently, no nontrivial quantum states can satisfy
$\langle \mathbf{u} |\,[t(u) + t(-u)]\,|\Psi_{-,o} \rangle = 0,$
which implies the absence of $|\Psi_{-,o}\rangle$.

For the case $G = \sigma^{x,y}$, we can also prove the non-existence of $|\Psi_{-,o}\rangle$. From Refs. \cite{KITANINE1999,Cao_2014,Wang2015}, we know that the eigenvalue of the transfer matrix $\Lambda(u)$ should satisfy
\begin{align}
\left[\Lambda(\tfrac{\eta}{2})\right]^L=\pm 1,\quad \Lambda(\tfrac{\eta}{2})\Lambda(-\tfrac{\eta}{2})=(-1)^{L+1}.\label{product:xxz}
\end{align}
When $L$ is odd, the condition $\Lambda(\eta/2) = -\Lambda(-\eta/2)$ contradicts Eq. \eqref{product:xxz}. 
This implies that the relation $\Lambda(u) = -\Lambda(-u)$ cannot be satisfied. 
Consequently, the boundary state $|\Psi_{-,o}\rangle$ does not exist for $G = \sigma^{x,y}$.

From Theorems \ref{thm2} and \ref{thm3}, we see that $G=\sigma^\alpha$ is more interesting than the generic case. Therefore, in the following, we focus on these specific cases. In the next sections, we let  $G=\sigma^z$ and $G=\sigma^x$ and derive the selection rules for Bethe roots corresponding to each type of boundary state.

\subsection{\label{sec:level1}Diagonal twist}
\paragraph{Integrability} We will start with a diagonal twist $G=\sigma^z$.
The Hamiltonian takes the form
\begin{equation}
H=\sum_{i=1}^{L}\sum_{\alpha=x,y,z}J^{\alpha}\sigma_i^{\alpha}\sigma_{i+1}^{\alpha},\quad  \sigma^{\alpha}_{L+1}=\sigma_1^z\sigma_1^{\alpha}\sigma_1^z. \label{Ham:xxz:z}
\end{equation}
The monodromy matrix in such case would be 
\begin{equation}
    \hat{T}_0(u)=G_0\mathcal{L}_{0,L}(u)\dots \mathcal{L}_{0,1}(u)=\begin{pmatrix}A(u)&B(u)\\-C(u)&-D(u)\end{pmatrix},
\end{equation}
where the $R$-matrix is given in Eq. (\ref{Rmatxxz}).
The transfer matrix is defined as usual by tracing out the auxiliary space $t(u)=\mathrm{Tr}_0(\hat{T}_0(u))$. The Hamiltonian in \eqref{Ham:xxz:z} can be given by the transfer matrix as follows
\begin{equation}
     H=2\sinh\eta\frac{\partial\ln t(u)}{\partial u}\Big\lvert_{u=\frac{\eta}{2}} - L\cosh\eta\times\mathbb{I}.
 \end{equation}

\paragraph{Exact solutions}
The twisted boundary condition does not break the 
$U(1)$ symmetry, so the eigenvalues and eigenvectors of the transfer matrix can be constructed via the standard algebraic Bethe ansatz.

The eigenvalue of the transfer matrix $\Lambda(u)$ can be parametrized by the  following $T\mbox{-}Q$ relation
\begin{equation}
    \Lambda(u)Q(u)=a(u)Q(u-\eta)-d(u)Q(u+\eta),
    \label{TQsigz}
\end{equation}
where $a(u)$ and $d(u)$ are defined in Eq. \eqref{def:ab} and $Q(u)$ reads \cite{He_2025}
\begin{align}
Q(u)=\prod_{j=1}^M\frac{\sinh(u-u_j)}{\sinh\eta},\quad M=0,1,\ldots,L.
\end{align}
The Bethe roots $\{u_1,\ldots,u_M\}$ should satisfy the Bethe equations 
\begin{equation}
    -\left[\frac{\sinh(u_j+\frac{\eta}{2})}{\sinh(u_j-\frac{\eta}{2})}\right]^L=\prod_{k\neq j}^M\frac{\sinh(u_j-u_k+\eta)}{\sinh(u_j-u_k-\eta)},\quad j=1,\dots,M.
\end{equation}
The eigenvectors of $t(u)$ can be parameterized as  
\begin{equation}
\begin{split}
    \lvert\mathbf{u}\rangle=B(u_1)\dots B(u_M)\lvert0\rangle,\\
    \langle \mathbf{u}|=\langle 0|C(u_1)\dots C(u_M).
\end{split}
\end{equation}
The corresponding energy in terms of the Bethe roots is  
\begin{equation}
    E(\mathbf{u})=2\sum_{j=1}^{M}[g(u_j-\tfrac{\eta}{2})-g(u_j+\tfrac{\eta}{2})]+Lg(\eta),
\end{equation}
which is the same as the periodic case. The norm of the above eigenvectors take the same form as the case of periodic chain, which is Eq. (\ref{norm_peri}). 
\paragraph{Selection rule}
The overlap between $|{\Psi_{\pm,s}}\rangle$ and  $|{\mathbf{u}}\rangle$ in non-zero only when $\Lambda(u)=\pm \Lambda(-u)$, which will give specific selection rule for $\{u_1,\ldots,u_M\}$. For the states $|\Psi_{-,e}\rangle$ and $|\Psi_{+,o}\rangle$, the Bethe roots must be paired as follows
\begin{equation}
\{u_1,\dots,u_M\}=\{-u_1+\ii\pi l_1,\dots,-u_M+\ii\pi l_M\},\quad l_i\in\mathbb{Z}. \label{Selection:DBC1}
\end{equation}
For the state $|\Psi_{+,e}\rangle$, one finds that the Bethe roots do not exhibit a paired structure. However, the condition $\Lambda(u)=\Lambda(-u)$ still imposes multiple constraints on the Bethe roots. To analyze them, we examine the properties of $\Lambda(u)$ at the points $u=\pm\eta/2$. And this yields the following relations
\begin{align}
\begin{aligned}\label{constraint:z:plus}
 \frac{d^{k}\Lambda(u)}{du^{k}}\Big\lvert_{u=\frac{\eta}{2}}&=\frac{d^{k}\Lambda(u)}{du^{k}}\Big\lvert_{u=-\frac{\eta}{2}},\quad k=0,2,\ldots,\\
 \frac{d^{k}\Lambda(u)}{du^{k}}\Big\lvert_{u=\frac{\eta}{2}}&=-\frac{d^{k}\Lambda(u)}{du^{k}}\Big\lvert_{u=-\frac{\eta}{2}},\quad k=1,3,\ldots.
 \end{aligned}
\end{align}
By considering the first three terms ($k=0,1,2$), we get the following selection rule 
\begin{align}
\begin{aligned}\label{selection:rule:z}
&\prod_{j=1}^M\frac{\sinh(u_j+\frac{\eta}{2})}{\sinh(u_j-\frac{\eta}{2})}=\pm \ii^{L-1},\\
&\sum_{j=1}^M\frac{1}{\sinh^2(u_j+\frac{\eta}{2})}=\sum_{j=1}^M\frac{1}{\sinh^2(u_j-\frac{\eta}{2})}.
\end{aligned}
\end{align}
It should be noted that Eq. \eqref{selection:rule:z} is a necessary condition for $\Lambda(u)=\Lambda(-u)$. Considering a larger $k$ in Eq. \eqref{constraint:z:plus} or analyzing $\Lambda(u)$ at other points may yield further constraints on the Bethe roots. 

From Eq. (\ref{TQsigz}), we can derive
\begin{align}
\begin{aligned}
&\lim_{u\to +\infty}\frac{\Lambda(u)}{(2\sinh\eta)^{-L}\ee^{L u}}-
\lim_{u\to +\infty}\frac{\Lambda(-u)}{(2\sinh\eta)^{-L}\ee^{L u}}=4\sinh\left((\tfrac{L}{2}-M)\eta\right).
\end{aligned}
\end{align}
It is obvious that the overlap $\langle \mathbf{u} | \Psi_{+,e} \rangle$ vanishes except in the sector with $M = L/2$. In other words, $M=L/2$ is a supplementary condition for the selection rule shown in \eqref{selection:rule:z}.
\paragraph{Boundary state overlaps}
The states $\lvert \Psi_{-,e}\rangle$ and $\lvert\Psi_{+,o}\rangle$ have paired selection rules from the above analysis. 
Choosing the following parameterization of the boundary $K$-matrix
\begin{equation}
\begin{split}
    \alpha_1=\cosh \theta,&\quad 
    \alpha_2=i \sinh \theta,\\
    \alpha_3=2\cosh \alpha \sinh\beta,&\quad \alpha_4=2\sinh\alpha\cosh\beta,
\end{split}    
\end{equation}
then the overlap between integrable boundary states $\lvert \Psi_{-,e}\rangle$ and Bethe states take the following
\begin{equation}
   \frac{\langle\mathbf{u}\lvert \Psi_{-,e}\rangle}{\sqrt{\langle\mathbf{u}\lvert\mathbf{u}\rangle}}=|\ee^{\theta(L-2M)}|^{1/2}(\sinh\eta)^{L/2}\Big(\prod_{j=1}^{M/2}y_l(\alpha,\beta,u_j)\Big)^{1/2}\sqrt{\frac{\det G^+}{\det G^-}}
\end{equation}
with the definition (\ref{gaulike_det}) and (\ref{yl_def}).
For the case of $\lvert\Psi_{+,o}\rangle$, one can change the parameter to 
\begin{equation}
    \alpha=i\pi/2,\quad \beta=\eta/2+i\pi/2
\end{equation}
with $a$ as arbitrary parameter, then the overlap formula would be 
\begin{equation}
    \frac{\langle\mathbf{u}\lvert \Psi_{+,o}\rangle}{\sqrt{\langle\mathbf{u}\lvert\mathbf{u}\rangle}}=|a^{2M-L}|^{1/2}(-a)^{L/2}\Big(\prod_{j=1}^{M/2}y_l(\alpha,\beta,u_j)\Big)^{1/2}\sqrt{\frac{\det G^+}{\det G^-}}.
\end{equation}


\subsection{Off-diagonal twist}
\paragraph{Integrability}
For an XXZ chain, the twists $G=\sigma^x$ and $G=\sigma^y$ play identical roles. In the following, we focus on the case $G=\sigma^x$. 
The Hamiltonian of the XXZ chain now reads
\begin{equation}
H=\sum_{i=1}^L\sum_{\alpha=x,y,z}J^\alpha\sigma^{\alpha}_i\sigma^{\alpha}_{i+1},\quad \sigma^{\alpha}_{L+1}\equiv\sigma_1^x\sigma_1^{\alpha}\sigma_1^x.\label{Ham:twsit:x}
\end{equation}
The monodromy matrix takes the form
\begin{equation}
    \tilde{T}_0(u)=\sigma_0^x\mathcal{L}_{0,L}(u)\dots \mathcal{L}_{0,1}(u)=\begin{pmatrix}C(u)&D(u)\\A(u)&B(u)\end{pmatrix},
\end{equation}
and the transfer matrix $t(u)$ is defined as 
\begin{equation}
    t(u)=\mathrm{Tr}_0(\tilde{T}_0(u))=B(u)+C(u).
\end{equation}
The first derivative of $\log t(u)$ gives the Hamiltonian in \eqref{Ham:twsit:x}
\begin{equation}
     H=2\sinh\eta\frac{\partial\ln t(u)}{\partial u}\Big\lvert_{u=\frac{\eta}{2}} - L\cosh\eta\times\mathbb{I}.
 \end{equation}

\paragraph{Exact solutions}
The topological boundary condition $\sigma^{\alpha}_{L+1}\equiv\sigma_1^x\sigma_1^{\alpha}\sigma_1^x$ breaks the $U(1)$-symmetry. In this case, the $T\mbox{-}Q$ relation can also be constructed with an inhomogeneous term \cite{Cao_2013,Wang2015}
\begin{align}
\Lambda(u)\tilde{Q}(u)=\ee^{u-\frac{\eta}{2}}a(u) \tilde{Q}(u-\eta)-\ee^{-u-\frac{\eta}{2}}d(u)\tilde{Q}(u+\eta)-c(u)a(u)d(u),
\label{ofTQ}
\end{align}
where $a(u)$ and $d(u)$ are given by Eq. \eqref{def:ab}, the function $\tilde Q(u)$ in \eqref{ofTQ} is a trigonometric polynomial in $u$ of fixed degree $L$ \cite{jiang2025rationalqsystemsintegrablespin}
\begin{align}
\tilde{Q}(u)&=\prod_{j=1}^L\frac{\sinh(u-u_j)}{\sinh\eta},
\end{align}
and $c(u)$ is defined as
\begin{equation}
\begin{split}
c(u)&=\exp\left[u-\frac{\eta}{2}-\sum_{l=1}^L\left(u_l+\frac{\eta}{2}\right)\right]-\exp\left[-u-\frac{\eta}{2}+\sum_{l=1}^L\left(u_l-\frac{\eta}{2}\right)\right].
\end{split}
\end{equation}
The corresponding BAEs now take the following form
\begin{equation}
\ee^{u_j-\frac{\eta}{2}}a(u_j)\tilde{Q}(u_j-\eta)-\ee^{-u_j-\frac{\eta}{2}}d(u_j)\tilde{Q}(u_j+\eta)-c(u_j)a(u_j)d(u_j)=0,\quad j=1,\dots ,L.
    \label{BAE-xxz-odia}
\end{equation}
The energy of the model now is
\begin{equation}
     E(\mathbf{u})=2\sum_{j=1}^{L}[g(u_j-\tfrac{\eta}{2})-g(u_j+\tfrac{\eta}{2})]+Lg(\eta)+2\sinh\eta.
\end{equation}
Based on the inhomogeneous $T$-$Q$ relation \eqref{ofTQ},
the Bethe state has been retrieved \cite{Zhang_2015}, specifically as follows 
\begin{equation}
\lvert\mathbf{u}\rangle=\prod_{i=1}^LD(u_i)\lvert\Omega\rangle,\label{beth-xxz-od}
\end{equation}
where $\{u_i\lvert i=1,\dots,L\}$ are given by Eq. (\ref{BAE-xxz-odia}). The reference state $|{\Omega}\rangle$ in \eqref{beth-xxz-od} is a highly entangled state
\begin{equation}
    \lvert\Omega\rangle=\sum_{l=0}^L\frac{(B^-)^l}{[l]_q!}\lvert0\rangle,
\end{equation}
where 
\begin{align}
[l]_q&=\frac{1-q^{2l}}{1-q^2},\quad  [l]_q!=[l]_q[l-1]_{q}\dots[1]_q,\quad q=\ee^{\eta},\\
B^-
&=\sum_{l=1}^L\ee^{\frac{L-1}{2}\eta}\ee^{\frac{\eta}{2}\sum_{k={l+1}}^L\sigma_k^z}\sigma_l^-\ee^{-\frac{\eta}{2}\sum_{k=1}^{l-1}\sigma_k^z}.
\end{align}

\paragraph{Selection rule}
To make the overlap $\langle{\mathbf{u}}|{\Psi_{\pm,s}}\rangle$ non-zero, we need either $\Lambda(u)=\Lambda(-u)$ or $\Lambda(u)=-\Lambda(-u)$, each of which imposes a specific selection rule on the Bethe roots. For the states $|\Psi_{-,e}\rangle$ and $|\Psi_{+,o}\rangle$, the selection rules are clear and the Bethe roots should form the following pattern $$\{u_1,\dots,u_L\}=\{-u_1+\ii\pi l_1,\dots,-u_L+\ii\pi l_L\},\quad l_i\in\mathbb{Z}.$$
In contrast, for the state $|\Psi_{+,e}\rangle$, no simple root pattern exists. Using the method illustrated in Eq. \eqref{constraint:z:plus}, we can get the following selection rule by analyzing the properties of $\Lambda(u)$ at $u=\pm\eta/2$
\begin{align}
\begin{aligned}\label{selection:rule:x1}
\prod_{j=1}^L\frac{\sinh(u_j+\frac{\eta}{2})}{\sinh(u_j-\frac{\eta}{2})}=\pm \ii^{L-1},\qquad 
\sum_{j=1}^L\frac{1}{\sinh^2(u_j+\frac{\eta}{2})}=\sum_{j=1}^L\frac{1}{\sinh^2(u_j-\frac{\eta}{2})}.
\end{aligned}
\end{align}

\section{Integrable boundary states in the XYZ chain}
\label{sec4}
In this section, we begin by reviewing the integrability of $\mathrm{XYZ}$ chain under a generic twist. 
Then, we construct the two-site integrable boundary state for an even number of sites $L$ and extend the construction of boundary states to odd $L$ with a rigorous proof. 
In the end, we analyze the $T$-$Q$ relation and derive the associated selection rules.
\subsection{Integrability of the XYZ chain}
\label{sec4.1}
Let us first introduce the following functions
\begin{equation}
    w_i(u)=\frac{\theta_{5-i}(u+\tfrac{\eta}{2})}{2\theta_{5-i}(\tfrac{\eta}{2})},\quad i=1,\dots,4,
\end{equation}
where $\theta_i(u)$ are theta-functions (see Appendix \ref{Appendix_A}). 
The $R$-matrix corresponding to the XYZ model is \cite{baxter1982,Slavnov_2020}
\begin{equation}
    R_{0,j}(u)=w_1(u)\sigma_0^x\otimes\sigma_j^x+w_2(u)\sigma_0^y\otimes\sigma_j^y+w_3(u)\sigma_0^z\otimes\sigma_j^z+w_4(u)\mathbb{I}_0\otimes\mathbb{I}_j.
\end{equation}
The Lax operator $\mathcal{L}_{0,j}$ is related to the $R$-matrix by a shift
\begin{equation}
    \mathcal{L}_{0,j}(u)=R_{0,j}(u-\tfrac{\eta}{2}),\quad \mathrm{with}\quad \mathcal{L}_{0,j}(u)=-\sigma_0^y\mathcal{L}^{t_0}_{0,j}(-u)\sigma_1^y,
\end{equation}
and satisfies the RLL relation 
\begin{equation}
    R_{0,\bar{0}}(u-v)\mathcal{L}_{0,j}(u)\mathcal{L}_{\bar{0},j}(v)=\mathcal{L}_{\bar{0},j}(v)\mathcal{L}_{0,j}(u)R_{0,\bar{0}}(u-v).\label{RLL:xyz}
\end{equation}
The monodromy matrix is defined as 
\begin{equation}
    T_0(u)=G_0\mathcal{L}_{0,L}(u)\dots \mathcal{L}_{0,1}(u),
\end{equation}
where $G$ represents the twist. To guarantee the integrability, the following 
$\mathrm{RTT}$-relation needs to be satisfied 
\begin{equation}
R_{0,\bar{0}}(u-v)T_0(u)T_{\bar{0}}(v)=T_0(u)T_{\bar{0}}(v)R_{0,\bar{0}}(u-v).\label{RTT:xyz}
\end{equation}
This implies the commutation relation $R_{0,\bar{0}}(u) G_0 G_{\bar{0}} = G_0 G_{\bar{0}} R_{0,\bar{0}}(u)$, which yields the following solutions for the twist operator $G$
\begin{align}
G=\mathbb{I},\quad \mbox{or}\quad G=\sigma^{x,y,z}.
\end{align}
The corresponding transfer matrix is 
\begin{align}
t(u)=\mathrm{Tr}_0(T_0(u)).\label{t:xyz}
\end{align}
With the help of Eq. \eqref{RTT:xyz}, we can prove $[t(u),\,t(v)]=0$. Similar to the XXZ case, one can obtain the Hamiltonian $H$ by taking the logarithmic derivative of the transfer matrix
\begin{align}
H&=2\frac{\ell{1}(\eta)}{\ell{1}'(0)}\frac{\partial\ln t(u)}{\partial u}\Big\lvert_{u=\frac{\eta}{2}} - L\frac{\ell{1}'(\eta)}{\ell{1}'(0)}\times\mathbb{I}\no\\
&=\sum_{i=1}^L\,\sum_{\alpha=x,y,z}J^{\alpha}\sigma_i^{\alpha}\sigma_{i+1}^{\alpha},\quad  \label{Ham:xyz}
\end{align}
where the boundary condition is 
$\sigma_{L+1}^\alpha\equiv G_1\sigma_1^\alpha G_1$
and the coupling constants in Eq. \eqref{Ham:xyz} are parametrized as 
\begin{equation}
    J_x=\frac{\theta_4(\eta)}{\theta_4(0)},\quad J_y=\frac{\theta_3(\eta)}{\theta_3(0)},\quad J_z=\frac{\theta_2(\eta)}{\theta_2(0)}.
\end{equation}

\subsection{Construction of \texorpdfstring{$|{\Psi_{\pm,e}}\rangle$}{Psi\_{pm,e}}}\label{sec:level2} 
The notion of generalized integrable boundary states extends directly from the XXZ to the XYZ spin chain. We begin by recalling their definition
\begin{align}
t(u) |\Psi_{\pm, s}\rangle = \pm t(-u) |\Psi_{\pm, s}\rangle, \quad s = e, o,
\end{align}
where $e$ and $o$ denote systems with an even and odd total number of sites, respectively. 

The integrable boundary states $|\Psi_{\pm,e}\rangle$ for an even $L$ can be obtained directly form the $K$-matrix, which has been discussed in Ref. \cite{Yuan_2024}. 
The generic $K$-matrix corresponding to the XYZ chain is \cite{BOYU_1993,Inami_1994,Guan1996}
\begin{align}
K(u)=\frac{\alpha_1}{k_1(u)}\sigma^x+\frac{\alpha_2}{k_2(u)}\sigma^y+\frac{\alpha_3}{k_3(u)}\sigma^z+\frac{\alpha_4}{k_4(u)}\,\mathbb{I},\quad k_i(u)=\theta_{5-i}(u),
\label{xyz-Kmat}
\end{align}
where $\alpha_1,\alpha_2,\alpha_3$ and $\alpha_4$ are free parameters. It obeys the following boundary Yang-Baxter equation
\begin{equation}
R_{1,2}(u-w)K_1(u)R_{2,1}(u+w)K_2(w)=K_2(w)R_{1,2}(u+w)K_1(u)R_{2,1}(u-w).
\end{equation}
Following Theorem \ref{thm0}, we can build the two-site states from the $K$-matrix as follows \cite{Yuan_2024}
\begin{equation}
    \lvert \Psi_{+,e}\rangle=\lvert\psi_0\rangle_{1,2}^{\otimes L/2},\qquad \lvert\psi_0\rangle_{1,2}=\sum_{i,j=1}^2[K(\tfrac{\eta}{2})\sigma^y]_{ij}\lvert i\rangle_1\otimes\lvert j\rangle_2.
    \label{xyz-two-site}
\end{equation}

In the presence of twists $G=\sigma^a$, we find that Theorem \ref{thm2} is also applicable to the XYZ chain. The construction of $|\Psi_{\pm,e}\rangle$ follows the same procedure as in Sections~\ref{sec:+e} and \ref{sec:-e}.
We can solve the compatibility conditions in Eqs. (\ref{KTcompatI}) or (\ref{KTcompatII}) to get proper integrable boundary states $|{\Psi_{\pm,e}}\rangle$, and the integrability condition of the states are given by the corresponding $\mathrm{KT}$-relation respectively.

\subsection{Construction of \texorpdfstring{$|{\Psi_{\pm,o}}\rangle$}{Psi\_{pm,o}}}

It is also possible to construct integrable boundary states for the XYZ chain on odd-length lattices. We find Theorems~\ref{thm1} and~\ref{thm3} all remain valid for the XYZ chain. As an example, we demonstrate the existence of the following boundary state
\begin{equation}
\lvert\Psi_{+,o}\rangle=\begin{pmatrix}1\\a\end{pmatrix}^{\otimes L}
,\quad a\in \mathbb{C},\quad \mbox{with}\,\, G=\mathbb{I}.\label{Boundary:State:Twist_xyz}
\end{equation}
\begin{proof}
Let us define 
\begin{align}
\mathcal{L}_{0,L}(u)\cdots \mathcal{L}_{0,2}(u)=\begin{pmatrix}\mathfrak{A}(u)&\mathfrak{B}(u)\\\mathfrak{C}(u)&\mathfrak{D}(u)\end{pmatrix},\quad 
\mathcal{L}_{0,1}(u)=\begin{pmatrix}\mathfrak{a}(u)&\mathfrak{b}(u)\\\mathfrak{c}(u)&\mathfrak{d}(u)\end{pmatrix}.
\end{align}
The transfer matrix defined in Eq. \eqref{t:xyz} can be expressed as
\begin{align}
t(u)=\mathfrak{A}(u)\mathfrak{a}(u)+\mathfrak{B}(u)\mathfrak{c}(u)+\mathfrak{C}(u)\mathfrak{b}(u)+\mathfrak{D}(u)\mathfrak{d}(u).
\end{align}
Acting the transfer matrix on the state $|\Psi_{+,o}\rangle$, one can derive 
\begin{align}
2t(u)|\Psi_{-,o}\rangle=&\binom{(\bell{1}(u)+\bell{2}(u))\mathfrak{A}(u)\ket{\Phi_0}}{a(\bell{1}(u)-\bell{2}(u))\mathfrak{A}(u)\ket{\Phi_0}}+\binom{(\bell{1}(u)-\bell{2}(u))\mathfrak{D}(u)\ket{\Phi_0}}{a(\bell{1}(u)+\bell{2}(u))\mathfrak{D}(u)\ket{\Phi_0}}\no\\
&+\binom{a(\bell{4}(u)+\bell{3}(u))\mathfrak{B}(u)\ket{\Phi_0}}{(\bell{4}(u)-\bell{3}(u))\mathfrak{B}(u)\ket{\Phi_0}}+\binom{a(\bell{4}(u)-\bell{3}(u))\mathfrak{C}(u)\ket{\Phi_0}}{(\bell{4}(u)+\bell{3}(u))\mathfrak{C}(u)\ket{\Phi_0}},
\end{align}
where 
\begin{align}
\lvert\Phi_0\rangle=\begin{pmatrix}1\\a\end{pmatrix}_2\otimes\begin{pmatrix}1\\a\end{pmatrix}_3\otimes\dots\otimes\begin{pmatrix}1\\a\end{pmatrix}_{L-1}\otimes\begin{pmatrix}1\\a\end{pmatrix}_L.\label{Def:Psi0xyz}
\end{align}
And here we use the short notation 
\begin{align}
\tilde{\theta}_i(u)=\frac{\theta_i(u)}{\theta_i(\frac{\eta}{2})}=\frac{k_{5-i}(u)}{k_{5-i}(\frac{\eta}{2})},\quad i=1,2,3,4.
\end{align}
The state defined in Eq. (\ref{Def:Psi0xyz}) can be seen as a two-sites boundary state, which can be derived from Eq. (\ref{xyz-two-site}) with 
\begin{align}
\alpha_1=-\frac{\left(a^2+1\right) k_1(\frac{\eta }{2})}{2a k_4(\frac{\eta }{2})},\quad \alpha_2=\frac{\ii (a^2-1) k_2(\frac{\eta }{2})}{2ak_4(\frac{\eta}{2})},\quad \alpha_3=0,\quad \alpha_4=1.
\end{align}
The matrices $\mathfrak{A}(u)$, $\mathfrak{B}(u)$, $\mathfrak{C}(u)$, $\mathfrak{D}(u)$ and the state $|\Phi_0\rangle$ now satisfy the same relations as in Eq. (\ref{TT:Relation}).
Equations (\ref{two:site:state}), (\ref{TT:Relation}) together with the following identities 
\begin{align}
k_1(-u)=k_1(u),\quad k_2(-u)=k_2(u),\quad k_3(-u)=k_3(u),\quad 
k_4(-u)=-k_4(u),
\end{align}
give rise to
\begin{align}
t(u) \lvert\Psi_{-,o}\rangle = t(-u) \lvert\Psi_{-,o}\rangle.
\end{align}
\end{proof}
For other integrable boundary states $|\Psi_{\pm,o}\rangle$,  the same proof technique applies.  We therefore omit the details.

\subsection{Non‑existence of specific integrable boundary states}\label{sec:non-BS}
From Refs. \cite{KITANINE1999,Cao_2014,Wang2015}, we know that the eigenvalue of the transfer matrix $\Lambda(u)$ should satisfy
\begin{align}
&G=\mathbb{I}:\qquad \left[\Lambda(\tfrac{\eta}{2})\right]^L=1,\quad \Lambda(\tfrac{\eta}{2})\Lambda(-\tfrac{\eta}{2})=(-1)^{L},\label{product:xyz:1}\\
&G=\sigma^\alpha:\qquad \left[\Lambda(\tfrac{\eta}{2})\right]^L=\pm1,\quad \Lambda(\tfrac{\eta}{2})\Lambda(-\tfrac{\eta}{2})=(-1)^{L+1}.\label{product:xyz:2}
\end{align}
With the help of Eqs. \eqref{product:xyz:1} and~\eqref{product:xyz:2}, we can conclude that the condition $\Lambda(\frac{\eta}{2}) = \pm \Lambda(-\frac{\eta}{2})$ cannot be satisfied in the following cases: (1) $|\Psi_{+,o}\rangle$ for $G = \mathbb{I}$; (2) $|\Psi_{-,o}\rangle$ for $G = \sigma^\alpha$ ($\alpha = x, y,z$). Consequently, the corresponding boundary states are absent.  

In case of $G = \mathbb{I}$, the identity $\Lambda(\eta/2) = -\Lambda(-\eta/2)$ can be satisfied only when $L = 4l,\,l \in \mathbb{N}^+$. From Eq. \eqref{Leading:Term:PBC}, we know that the non-trivial $\lvert \Psi_{-,e} \rangle$ cannot be constructed in the periodic $\mathrm{XXZ}$ spin chain. Since the XXZ chain can be regarded as a specific smooth limit of the $\mathrm{XYZ}$ chain, it is natural to infer that $\lvert \Psi_{-,e} \rangle$ does not exist in the periodic XYZ model either.

\section{Exact solutions of the XYZ chain and selection rules}
\label{sec5}

\subsection{Periodic XYZ  chain}
Due to the lack of ${U(1)}$ symmetry in the $\mathrm{XYZ}$ chain, its reference state is rather complicated. Therefore, we will not delve into the details of constructing Bethe eigenstates \cite{BAXTER1973_1, Slavnov_2020}.
From Baxter's pioneering work \cite{PhysRevLett.26.832,baxter1982}, the $T\mbox{-}Q$ relation for the periodic XYZ chain with an even $L$ is given by \cite{baxter1982}
\begin{align}
\Lambda(u)=&\,\ee^{\ii\pi\eta\nu}\left[\frac{\theta_1(u+\frac{\eta}{2})}{\theta_1(\eta)}\right]^L\prod_{k=1}^M\frac{\theta_1(u-u_k-\eta)}{\theta_1(u-u_k)}\no\\
&\,+\ee^{-\ii\pi\eta\nu}\left[\frac{\theta_1(u-\frac{\eta}{2})}{\theta_1(\eta)}\right]^L\prod_{k=1}^M\frac{\theta_1(u-u_k+\eta)}{\theta_1(u-u_k)},
\end{align}
where $M=L/2$ and the parameter $\nu$ is an integer. The Bethe roots $\{u_k\}$ satisfy the following Bethe equations
\begin{align}
&\ee^{2\ii\pi\eta\nu}\,\left[\frac{\theta_{1}(u_j+\frac{\eta}{2})}{\theta_{1}(u_j-\frac{\eta}{2})}\right]^{L}\,\prod_{k\neq j}^{M}\frac{\theta_{1}(u_j-u_k-\eta)}{\theta_{1}(u_j-u_k+\eta)}=1,\quad j=1,2,\ldots,M,\label{BAE;M;1}
\end{align}
and the valid Bethe roots satisfy the selection rule
\begin{align}
2\sum_{j=1}^{M}u_j=k+\nu\tau,\qquad k,\nu\in\mathbb{Z}.\label{SelectionRule}
\end{align}  
The energy of Hamiltonian in terms of Bethe roots is 
\begin{align}
&E(\mathbf{u})=2\sum_{j=1}^{M}[f(u_j-\tfrac{\eta}{2})-f(u_j+\tfrac{\eta}{2})]+Lf(\eta),\quad f(u)=\frac{\theta_{1}(\eta)\theta_{1}'(u)}{\theta_{1}'(0)\theta_{1}(u)}.\label{Energyxyzp}
\end{align}
\paragraph{Selection rule given by $|\Psi_{+,e}\rangle$}
To make the overlap between integrable boundary state $\lvert \Psi_{+,e}\rangle$ and Bethe eigen-states non-zero, we need the constraint $\Lambda(u)=\Lambda(-u)$, which gives us the following selection rule for the Bethe roots configuration ($M=L/2$)
\begin{equation}
\{u_1,\dots,u_M\}=\{-u_1+l_1\tau+j_1,\dots,-u_M+l_M\tau+j_M\},\quad \nu=\sum_{i=1}^Ml_i,\quad l_i,j_i\in\mathbb{Z}.\label{Selection:PBCOxyz}
\end{equation}

Once $L$ is odd, it has been discussed in Ref. \cite{Wang2015} that the $T$-$Q$ relation should contain an inhomogeneous term and can take the following form 
\begin{align}
\Lambda(u)=&\,\ee^{\ii\pi l_1 (u-\frac{\eta}{2})+\ii\xi}\left[\frac{\theta_1(u+\frac{\eta}{2})}{\theta_1(\eta)}\right]^L\frac{Q_1(u-\eta)}{Q_2(u)}+\ee^{-\ii\pi l_1 (u+\frac{\eta}{2})-\ii\xi}\left[\frac{\theta_1(u-\frac{\eta}{2})}{\theta_1(\eta)}\right]^L\frac{Q_2(u+\eta)}{Q_1(u)}\no\\
&+c\frac{\ell{1}(u)}{\ell{1}(\eta)}\left[\frac{\theta_1(u+\frac{\eta}{2})}{\theta_1(\eta)}\right]^L\left[\frac{\theta_1(u-\frac{\eta}{2})}{\theta_1(\eta)}\right]^L\frac{1}{Q_1(u)Q_2(u)},
\end{align}
where $l_1$ is an \textit{even} integer and 
\begin{align}
Q_1(u)=\prod_{j=1}^{M}\frac{\ell{1}(u-\mu_j)}{\ell{1}(\eta)},\quad Q_2(u)=\prod_{j=1}^{M}\frac{\ell{1}(u-\nu_j)}{\ell{1}(\eta)},\quad M=\frac{L+1}{2}.
\end{align}
The $2M+2$ parameters $\xi$, $c$, $\{\mu_j\}$ and $\{\nu_j\}$ should satisfy the following Bethe equations 
\begin{align}
\begin{aligned}\label{BAE:xyz:odd}
&(L-2M)\eta-2\sum_{j=1}^M(\mu_j-\nu_j)=l_1\tau+2m_1,\quad m_1\in \mathbb{Z},\\
&\sum_{j=1}^{M}(\mu_j+\nu_j)=m_2,\quad m_2\in \mathbb{Z},\\
&c\frac{\ee^{\ii\pi l_1(\mu_j+\frac{\eta}{2})+\ii\xi}\ell{1}(\mu_j)}{\ell{1}(\eta)}\left[\frac{\theta_1(\mu_j+\frac{\eta}{2})}{\theta_1(\eta)}\right]^L=-Q_2(\mu_j)Q_2(\mu_j+\eta),\\
&c\frac{\ee^{-\ii\pi l_1(\nu_j-\frac{\eta}{2})-\ii\xi}\ell{1}(\nu_j)}{\ell{1}(\eta)}\left[\frac{\theta_1(\nu_j-\frac{\eta}{2})}{\theta_1(\eta)}\right]^L=-Q_1(\nu_j)Q_1(\nu_j-\eta),\\
&\ee^{\ii L\xi}\left[\frac{Q_1(-\frac{\eta}{2})}{Q_2(\frac{\eta}{2})}\right]^L=1.
\end{aligned}
\end{align}
The energy of the system can be given by the Bethe roots as follows
\begin{align}
E(\mathbf{\mu},\mathbf{\nu})=2\sum_{j=1}^M[f(\nu_j-\tfrac{\eta}{2})-f(\mu_j+\tfrac{\eta}{2})]+Lf(\eta)+2\ii\pi l_1\frac{\ell{1}(\eta)}{\ell{1}'(0)},
\end{align}
where $f(u)$ is defined in Eq. \eqref{Energyxyzp}.

\paragraph{Selection rule given by $|\Psi_{-,o}\rangle$} The boundary state $|\Psi_{-,o}\rangle$ can select the Bethe state $|\Psi(\mathbf{\mu},\mathbf{\nu})\rangle$ satisfying $\Lambda(u)|\Psi(\mathbf{\mu},\mathbf{\nu})\rangle=-\Lambda(-u)|\Psi(\mathbf{\mu},\mathbf{\nu})\rangle$. For the overlap $\langle\Psi(\mathbf{\mu},\mathbf{\nu})|\Psi_{-,o}\rangle$ to be non-zero, the Bethe roots and the parameter $\xi$ should satisfy the following constraint
\begin{align}
 \{\nu_1,\ldots,\nu_M\}=\{-\mu_1,\ldots,-\mu_M\},\quad \ee^{\ii\xi}=(-1)^{\frac{L(L+1)}{2}}.
\end{align}


\subsection{XYZ chain with twisted boundary}
In the presence of a $\sigma^x$ twist, the construction of transfer matrix is laid out in Section \ref{sec4.1}, while the corresponding Hamiltonian reads (\ref{Ham:xyz}). The construction of Bethe states is not well-established in this case, therefore, we will not delve too deeply into this matter. The eigenvalue of the transfer matrix can be parameterized by the following inhomogeneous $T$-$Q$ relation \cite{Wang2015}
\begin{align}
\Lambda(u)=&\,\ee^{\ii\pi l_1 (u-\frac{\eta}{2})+\ii\xi}\left[\frac{\theta_1(u+\frac{\eta}{2})}{\theta_1(\eta)}\right]^L\frac{Q_1(u-\eta)}{Q_2(u)}-\ee^{-\ii \pi l_1(u+\frac{\eta}{2})-\ii\xi}\left[\frac{\theta_1(u-\frac{\eta}{2})}{\theta_1(\eta)}\right]^L\frac{Q_2(u+\eta)}{Q_1(u)}\no\\
&+c\frac{\ee^{\ii \pi u}\ell{1}^m(u)}{\ell{1}^m(\eta)}\left[\frac{\theta_1(u+\frac{\eta}{2})}{\theta_1(\eta)}\right]^L\left[\frac{\theta_1(u-\frac{\eta}{2})}{\theta_1(\eta)}\right]^L\frac{1}{Q_1(u)Q_2(u)},
\end{align}
where $l_1$ is an \textit{odd} integer, $m$ is 0 for even $L$ and 1 for odd $L$ and 
\begin{align}
Q_1(u)=\prod_{j=1}^{M}\frac{\ell{1}(u-\mu_j)}{\ell{1}(\eta)},\quad Q_2(u)=\prod_{j=1}^{M}\frac{\ell{1}(u-\nu_j)}{\ell{1}(\eta)},\quad M=\frac{L+m}{2}.
\end{align}
The $2M+2$ parameters $\xi$, $c$, $\{\mu_j\}$ and $\{\nu_j\}$ should satisfy the following Bethe equations 
\begin{align}
\begin{aligned}
&(L-2M)\eta-2\sum_{j=1}^M(\mu_j-\nu_j)=l_1\tau+2m_1,\quad m_1\in \mathbb{Z},\\
&2\sum_{j=1}^{M}(\mu_j+\nu_j)=2m_2+\tau,\quad m_2\in \mathbb{Z},\\
&c\frac{\ee^{\ii\pi \mu_j}\ee^{\ii\pi l_1(\mu_j+\frac{\eta}{2})+\ii\xi}\ell{1}^m(\mu_j)}{\ell{1}^m(\eta)}\left[\frac{\theta_1(\mu_j+\frac{\eta}{2})}{\theta_1(\eta)}\right]^L=Q_2(\mu_j)Q_2(\mu_j+\eta),\\
&c\frac{\ee^{\ii\pi \nu_j}\ee^{-\ii\pi l_1(\nu_j-\frac{\eta}{2})-\ii\xi}\ell{1}^m(\nu_j)}{\ell{1}^m(\eta)}\left[\frac{\theta_1(\nu_j-\frac{\eta}{2})}{\theta_1(\eta)}\right]^L=-Q_1(\nu_j)Q_1(\nu_j-\eta),\\
&\ee^{\ii L\xi}\left[\frac{Q_1(-\frac{\eta}{2})}{Q_2(\frac{\eta}{2})}\right]^L=\pm 1.
\end{aligned}
\end{align}
The energy of the system now reads
\begin{align}
E(\mathbf{\mu},\mathbf{\nu})=2\sum_{j=1}^M[f(\nu_j-\tfrac{\eta}{2})-f(\mu_j+\tfrac{\eta}{2})]+Lf(\eta)+2\ii\pi l_1\frac{\ell{1}(\eta)}{\ell{1}'(0)}.
\end{align}


\paragraph{Selection rule}
For a non-zero overlap between the boundary state and a Bethe state, the condition $\Lambda(u)=\Lambda(-u)$ or $\Lambda(u)=-\Lambda(-u)$ must hold. This condition gives a specific selection rule on the Bethe roots.
Unfortunately, no simple pattern is evident for the Bethe roots in this case. Nevertheless, we can obtain a selection rule by analyzing $\Lambda(u)$ at $u=\pm\eta/2$ (cf. Eq.~\eqref{constraint:z:plus}).
The function $\Lambda(u)$ at the points $u=\pm \eta/2$ can be calculated 
\begin{align}
\Lambda(\tfrac{\eta}{2})=\ee^{\ii\xi}\frac{Q_1(-\frac{\eta}{2})}{Q_2(\frac{\eta}{2})},\quad 
\Lambda(-\tfrac{\eta}{2})=\ee^{-\ii\xi}(-1)^{L+1}\frac{Q_2(\frac{\eta}{2})}{Q_1(-\frac{\eta}{2})}.
\end{align}
Imposing $\Lambda(u)=\Lambda(-u)$ yields the following necessary condition for a non‑zero overlap between the Bethe state and the boundary state
\begin{align}
\ee^{\ii\xi}\frac{Q_1(-\frac{\eta}{2})}{Q_2(\frac{\eta}{2})}=\pm \ii^{L+1}, \label{constraint:xyz:1}
\end{align}
Once we require $\Lambda(u)=-\Lambda(-u)$, the following necessary constraint must be satisfied
\begin{align}
\ee^{\ii\xi}\frac{Q_1(-\frac{\eta}{2})}{Q_2(\frac{\eta}{2})}=\pm \ii^{L}, \label{constraint:xyz:2}
\end{align}
Here, we only consider the values of $\Lambda(u)$ at $u=\pm \eta/2$. Considering the $k$-th derivative of $\Lambda(u)$ should yield further constraints on the Bethe roots.



\section{Conclusion and discussion}
\label{sec6}
Previous studies on integrable boundary states have focused on the periodic spin chain with even number of sites. Our work presents a new family of generalized integrable boundary states in $\mathrm{XXZ}$ and $\mathrm{XYZ}$ chains under both periodic and twisted boundary conditions (summarized in Tables \ref{tab1} and \ref{tab2}). Our generalized integrable boundary states differ from conventional ones in two key aspects: they exist for odd‑site systems, and their defining condition generalizes from $t(u) |\Psi\rangle= t(-u) |\Psi\rangle$ to $t(u) |\Psi\rangle=\pm t(-u) |\Psi\rangle$.  
Based on the variant of $\mathrm{KT}$-relation, all the boundary states can be proved in a simple way.

The selection rules given by the boundary states is also studied. In most cases, the Bethe roots in the $T$-$Q$ relation must form pairs to guarantee a non-zero overlap between the Bethe state and the integrable boundary state. However, this paired pattern fails to appear in the following settings: (i) $|\Psi_{+,e}\rangle$ for the XXZ chain with twisted boundary; (ii) $|\Psi_{\pm,e}\rangle$ and $|\Psi_{+,o}\rangle$ for the XYZ chain with twisted boundary. In these specific cases, the selection rules for the Bethe roots are obtained by analyzing the behavior of $\Lambda(u)$ at the points $u=\pm\eta/2$ (see Eqs. (\ref{constraint:z:plus}), (\ref{selection:rule:z}), (\ref{selection:rule:x1}), (\ref{constraint:xyz:1}) and (\ref{constraint:xyz:2})).

The selection rule need to be discussed further. Let $\mathcal{F}^\pm$ be the set of Bethe states $|\phi\rangle$ for which $\langle\Psi_{\pm,s}|\phi\rangle\neq 0,\,s=e\,\mbox{or}\, o$. The dimension of $\mathcal{F}^\pm$ is less than the dimension of the Hilbert space, while its exact value depends on the explicit form of the boundary state. In this paper, what we investigate is another enlarged set $\mathcal{F}_u^\pm$ ($\mathcal{F}_{u}^\pm\supseteq\mathcal{F}^\pm$), defined as the collection of Bethe states $|\phi\rangle$ satisfying $t(u)|\phi\rangle=\pm t(-u)|\phi\rangle$. We numerically compute the dimension of $\mathcal{F}_u^\pm$ for the XXZ and XYZ chains under various integrable boundary conditions and present the results in Table \ref{tab3}. We conclude that 
\begin{align}
\rm{dim} (\mathcal{F}_{u}^-)=\sqrt{\rm{dim} (\mathcal{H})},\,\,\mbox{or}\,\,2\sqrt{\rm{dim} (\mathcal{H})},
\end{align}
except when $G=\mathbb{I}$ and $L$ is even, for which we do not obtain a closed formula. This result implies that the boundary state \textit{significantly} restricts the set of Bethe states that are relevant for the dynamical issue.  In some cases, a necessary and sufficient condition for $t(u)|\phi\rangle=\pm t(-u)|\phi\rangle$ cannot be expressed in an elegant closed form. Nevertheless, useful constraints can be obtained by analyzing $\Lambda(u)$ at the points $u=\pm\eta/2$. Let us introduce another set  $\mathcal{F}_\eta^\pm$ ($\mathcal{F}_{\eta}^\pm\supseteq\mathcal{F}_{u}^\pm$), defined as the collection of Bethe states $|\phi\rangle$ satisfying $t(\frac{\eta}{2})|\phi\rangle=\pm t(-\frac{\eta}{2})|\phi\rangle$. Note that the operator $t(\frac{\eta}{2})=G_1R_{1,N}\dots R_{1,3}P_{1,2}$ acts as a shift operator. Since $t(\frac{\eta}{2})t(-\frac{\eta}{2})\!=\!(-1)^k\times \mathbb{I}$, the condition $t(\frac{\eta}{2})|\phi\rangle=\pm t(-\frac{\eta}{2})|\phi\rangle$ forces the total momentum to take specific quantized values. The numerical results for the dimension of $\mathcal F^\pm_\eta$ are summarized in Table \ref{tab4}.  For small system sizes ($L\leq 6$), the sets  $\mathcal{F}_{\eta}^\pm$ and $\mathcal{F}_{u}^\pm$ nearly coincide, except when $G=\mathbb{I}$ and $L$ is a multiple of four; in that case $\mathcal{F}_{\eta}^-\neq\emptyset,\,\,\mathcal{F}_{u}^-=\emptyset$ (see the red parts in Table \ref{tab4}).  We conclude (except the case where $G=\mathbb{I}$ and $L=4k,\,k\in\mathbb{N}^+$)
\begin{align}
\rm{dim} (\mathcal{F}_{\eta}^-)\sim L^{-1}{\rm{dim} (\mathcal{H})},\,\,\mbox{or}\,\,2L^{-1}{\rm{dim} (\mathcal{H})}.
\end{align}
This result demonstrates that even the zero‑order derivative of $\Lambda(u)$ (i.e., its value) at specific point $u=\frac{\eta}{2}$ is sufficient to enforce a strict constraint on the Bethe state.

\begin{table}[htbp]
\centering
\begin{tabular}{|c|c|c|c|c|c|c|c|c|}
\hline
\diagbox{G} {$N_+$} {$L$}& 3& 4 & 5 & 6 & 7 & 8 & 9 & 10\\[4pt]
\hline
$\mathbb{I}$ &0  &10 & 0& 24& 0& 54& 0& 120\\[4pt]
\hline 
${\sigma^x}$ &4& 4& 8& 8& 16& 16& 32& 32\\[4pt]
\hline 
${\sigma^y}$ &4& 4& 8& 8& 16& 16& 32& 32\\[4pt]
\hline 
${\sigma^z}$ &4& 4& 8& 8& 16& 16& 32& 32\\[4pt]
\hline 
\end{tabular}

~~~

~~~~

\begin{tabular}{|c|c|c|c|c|c|c|c|c|}
\hline
\diagbox{G} {$N_-$} {$L$}& 3& 4 & 5 & 6 & 7 & 8 & 9 & 10\\[4pt]
\hline
$\mathbb{I}$ &4& 0& 8& 0& 16& 0& 32& 0\\[4pt]
\hline 
${\sigma^x}$ &0& 4& 0& 8& 0& 16& 0& 32\\[4pt]
\hline 
${\sigma^y}$ &0& 4& 0& 8& 0& 16& 0& 32\\[4pt]
\hline 
${\sigma^z}$ &0& 4& 0& 8& 0& 16& 0& 32\\[4pt]
\hline 
\end{tabular}
\caption{Zero eigenvalue degeneracy $N_\pm$ of the operator $t(u)\mp t(-u)$ versus system length $L$ and twist matrix $G$. We obtain the results numerically by setting the spectral parameter $u$ and the crossing parameter $\eta$ as random parameters. All results in this table hold for both the XXZ and XYZ chains.}
\label{tab3}
\end{table}
\begin{table}[htbp]
\centering
\begin{tabular}{|c|c|c|c|c|c|c|c|c|}
\hline
\diagbox{G} {~~\\$\widetilde N_+$} {$L$}& 3& 4 & 5 & 6 & 7 & 8 & 9 & 10\\[4pt]
\hline
$\mathbb{I}$ &0  &10 & 0& 24& 0& 70& 0& 208\\[4pt]
\hline 
${\sigma^x}$ &4& 4& 8& 12& 20& 32& 60& 104\\[4pt]
\hline 
${\sigma^y}$ &4& 4& 8& 12& 20& 32& 60& 104\\[4pt]
\hline 
${\sigma^z}$ &4& 4& 8& 12& 20& 32& 60& 104\\[4pt]
\hline 
\end{tabular}

\vspace{0.5cm}

\begin{tabular}{|c|c|c|c|c|c|c|c|c|}
\hline
\diagbox{G} {~~\\$\widetilde N_-$} {$L$}& 3& 4 & 5 & 6 & 7 & 8 & 9 & 10\\[4pt]
\hline
$\mathbb{I}$ &4& \color{red}{6}& 8& 0& 20& \color{red}{66}& 60& 0\\[4pt]
\hline 
${\sigma^x}$ &0& 4& 0& 12& 0& 32& 0& 104\\[4pt]
\hline 
${\sigma^y}$ &0& 4& 0& 12& 0& 32& 0& 104\\[4pt]
\hline 
${\sigma^z}$ &0& 4& 0& 12& 0& 32& 0& 104\\[4pt]
\hline 
\end{tabular}
\caption{Zero eigenvalue degeneracy $\widetilde N_\pm$ of the operator $t(\frac{\eta}{2})\mp t(-\frac{\eta}{2})$ versus system length $L$ and twist matrix $G$. We obtain the results numerically by setting the crossing parameter $\eta$ to a random value. All results in this table hold for both the XXZ and XYZ chains.}
\label{tab4}
\end{table}

A natural follow-up question is the explicit overlap between these integrable boundary states and Bethe states. An open question is whether such overlap takes the same universal term, namely the ratio of Gaudin-like determinant \cite{Pozsgay_2014, Brockmann_2014}, as in rational case. Some numerical evidence has shown that such Gaudin-like determinant will be present in odd site periodic $\mathrm{XXZ}$ chain \cite{Pozsgay_2018}. Nevertheless, this question is challenging because the explicit form of the Bethe states is not fully accessible in certain cases. In the XXZ chain with non-diagonal twist, Bethe states are known yet their norms and scalar products are undetermined \cite{Zhang_2015}. Likewise, for the periodic XYZ chain with odd $L$ or with twisted boundaries, an explicit construction of Bethe states is still lacking.

In this work, we focus on constructing integrable boundary states via the KT relation, restricting these states to the two‑site factorized form. However, it has been discussed that some other integrable states beyond these two-site states can still exist, such as the crosscap state \cite{Caetano_2022} and Dicke states \cite{Nepomechie_2024}
\footnote{In certain cases, Dicke states can be obtained directly from our two-point boundary states. For example, in the periodic case, the product state \( \binom{1}{a}^{\otimes L} \) with arbitrary parameter \( a \) constitutes an integrable boundary state. Expanding this state in powers of \( a \) yields a set of Dicke states, which are also integrable boundary states.}.  All the boundary states constructed in this paper can annihilate certain parity‑symmetric charges. This restriction, though simple, is not the only possibility. There may exist other states that can select certain Bethe states, but one can not find such an obvious restriction on the transfer matrix. The existence of other (generalized) integrable states could be a way of proceeding for future research. 

The integrable boundary states constructed in this paper have a remarkable simple factorized structure, which makes it more likely to be prepared in cold atom experiments \cite{Guan_2013,Jepsen:2021,Jepsen:2021_2}. These states are idea examples to study non-thermalizing dynamics, in quantum many-body systems, both analytically and experimentally.  
Another interesting question is to see whether these integrable boundary states retain a simple structure in the thermodynamic limit.

\section*{Acknowledgments}
Xin Qian gratefully acknowledges Prof. Yunfeng Jiang for many helpful discussions and revisions of the paper. He also wants to acknowledge the hospitality of the Niels Bohr International Academy and Shi-Tung Yau Center of Southeast University where a part of this work was carried out. The work of Xin Qian was supported by CSC through grant number 20220794001. Xin Zhang acknowledges financial support from the National Natural Science Foundation of China (No. 12575007). Both authors thank Xi-Wen Guan, Balázs Pozsgay and Yuan Miao for their valuable comments.

\appendix
\section{Theta function convention}
\label{Appendix_A}
We use the Jacobi $\theta$-function with definition as
\begin{equation}
    \begin{split}
\theta_1(u\vert\tau)&=-\ii\sum_{k\in \mathbb{Z}}(-1)^kq^{(k+1/2)^2}\ee^{\ii\pi(2k+1)u},\\
\theta_2(u\vert\tau)&=\sum_{k\in \mathbb{Z}}q^{(k+1/2)^2}\ee^{\ii\pi(2k+1)u},\\
\theta_3(u\vert\tau)&=\sum_{k\in \mathbb{Z}}q^{k^2}\ee^{2\ii\pi ku},\\
\theta_4(u\vert\tau)&=\sum_{k\in \mathbb{Z}}(-1)^kq^{k^2}\ee^{2\ii\pi ku},\\
\end{split}
\end{equation}
where $\tau\in\mathbb{C},\mathrm{Im}(\tau)>0$ and $q=\ee^{\ii\pi\tau}$. And we will use $\vartheta_i(u):=\theta(u\lvert 2\tau)$ for short hand notation. 
\section{Integrable final state from Boundary Yang-Baxter equation}
\label{Appendix_B}
The boundary Yang-Baxter equation Eq. (\ref{BYBeq}) can be reformulated into the KYB equation, which is proved in Appendix \ref{Appendix_C},
\begin{equation}
    \vec{K}_{1,2}(u)\vec{K}_{3,4}(v+\tfrac{\eta}{2})\mathcal{L}_{1,4}(u+v)\mathcal{L}_{1,3}(u-v)=\vec{K}_{1,2}(u)\vec{K}_{3,4}(v+\tfrac{\eta}{2})\mathcal{L}_{2,3}(u+v)\mathcal{L}_{2,4}(u-v),
    \label{KYBEQ1}
\end{equation}
where we use the convention in \cite{Gombor_2021} that $\vec{K}$ is a re-arrangement of $K(u)$ into a co-vector as
\begin{equation}
    \vec{K}(u)=\sum_{i,j}\tilde{k}_{ij}(u)e_i^{*}\otimes e_j^{*},\quad \tilde{k}_{ij}=\left[\sigma^{y}K(-u)\right]_{ij},
\end{equation}
with the column vector $e_1^{*}=\begin{pmatrix}
    1\\0
\end{pmatrix}$, $e_2^{*}=\begin{pmatrix}
    0\\1
\end{pmatrix}$.
From $\mathrm{KYB}$ equation, we can now define integrable finial states as
\begin{equation}
    \langle \Psi\lvert=\vec{K}(\tfrac{\eta}{2})^{\otimes \frac{L}{2}},
\end{equation}
one can easily get the so called $\mathrm{KT}$-relation,
\begin{equation}
    \langle\Psi\lvert\vec{K}_{12}(u)T_{1}(u)=\langle\Psi\lvert\vec{K}_{12}(u)T^{\pi}_{2}(u),
    \label{KT1}
\end{equation}
where $T_0^{\pi}=\mathcal{L}_{0,1}(u)\dots \mathcal{L}_{0,L}(u)$ is the space reflected version of the monodromy matrix, $T(u)$ is the monodromy matrix. Eq. (\ref{KT1}) can be rewritten as
\begin{equation}
    \langle\Psi\lvert \sigma^yK^{\mathrm{t}}(-u)\sigma^yT(u)=\langle\Psi\lvert T(-u)\sigma^yK^{\mathrm{t}}(-u)\sigma^y.
    \label{KT2}
\end{equation}
By taking a trace over the auxiliary space, one can easily prove the integrability condition $\langle\Psi\lvert\left(t(u)-t(-u)\right)=0$.
\section{The equivalence between KYB equation and BYB equation}
\label{Appendix_C}
Given the operator $V$ acting on $h_1$, we use the following convention for its matrix elements \cite{Piroli_2017}
\begin{equation}
    \langle i\lvert V\lvert j\rangle=V_{ij}=V^i_j,
\end{equation}
so that
\begin{equation}
    \langle i\lvert V=\langle j\lvert V^i_{j}.
\end{equation}
Analogously, given the operator $R_{12}(u)$ acting on $h_1\otimes h_2$, we use
\begin{equation}
    {}_1{\langle} i\lvert{}_2\langle j\lvert R_{12}(u)\lvert l\rangle_1\lvert k\rangle_2=R_{lk}^{ij}(u),
\end{equation}
so that
\begin{equation}
    \langle i,j\lvert R_{12}(u)=R^{ij}_{lk}\langle l,k\lvert.
\end{equation}
Using these conventions, the matrix elements for the partial transposed are easily written 
\begin{equation}
    {}_1{\langle} i\lvert{}_2\langle j\lvert R^{t_1}_{12}(u)\lvert l\rangle_1\lvert k\rangle_2=R_{ik}^{lj}(u).
\end{equation}
The $\mathrm{KYB}$ equation reads 
\begin{equation}
    \vec{K}_{12}(u)\vec{K}_{34}(v+\tfrac{\eta}{2})\mathcal{L}_{14}(u+v)\mathcal{L}_{13}(u-v)=\vec{K}_{12}(u)\vec{K}_{34}(v+\tfrac{\eta}{2})\mathcal{L}_{23}(u+v)\mathcal{L}_{24}(u-v).
    \label{B6}
\end{equation}
The LHS of Eq. (\ref{B6}) is
\begin{equation}
    \begin{split}
    \mathrm{LHS}&=\langle i_1\lvert\otimes\langle i_2\lvert\otimes\langle i_3\lvert\otimes\langle i_4\lvert \tilde{k}_{i_1,i_2}(u)\tilde{k}_{i_3,i_4}(v+\tfrac{\eta}{2})\mathcal{L}_{14}(u+v)\mathcal{L}_{13}(u-v)\\
&=\langle k_1\lvert\otimes\langle i_2\lvert\otimes\langle j_3\lvert\otimes\langle j_4\lvert \tilde{k}_{i_1,i_2}(u)\tilde{k}_{i_3,i_4}(v+\tfrac{\eta}{2})\mathcal{L}_{j_1j_4}^{i_1i_4}(u+v)\mathcal{L}_{k_1j_3}^{j_1i_3}(u-v)\\
&=\langle j_1\lvert\otimes\langle j_2\lvert\otimes\langle j_3\lvert\otimes\langle j_4\lvert \tilde{k}_{i_1,j_2}(u)\tilde{k}_{i_3,i_4}(v+\tfrac{\eta}{2})\mathcal{L}_{i_2j_4}^{i_1i_4}(u+v)\mathcal{L}_{j_1j_3}^{i_2i_3}(u-v),
\end{split}
\end{equation}
where the RHS of Eq. (\ref{B6}) is
\begin{equation}
    \begin{split}
\mathrm{RHS}=\langle j_1\lvert\otimes\langle j_2\lvert\otimes\langle j_3\lvert\otimes\langle j_4\lvert \tilde{k}_{j_1,i_2}(u)\tilde{k}_{i_3,i_4}(v+\tfrac{\eta}{2})\mathcal{L}_{i_1j_3}^{i_2i_3}(u+v)\mathcal{L}_{j_2j_4}^{i_1i_4}(u-v).
\label{C8}
\end{split}
\end{equation}
Then we will get
\begin{align}
\tilde{k}_{i_1,j_2}(u)\tilde{k}_{i_3,i_4}(v+\tfrac{\eta}{2})\mathcal{L}_{i_2j_4}^{i_1i_4}(u+v)\mathcal{L}_{j_1j_3}^{i_2i_3}(u-v)\no\\
=\tilde{k}_{j_1,i_2}(u)\tilde{k}_{i_3,i_4}(v+\tfrac{\eta}{2})\mathcal{L}_{i_1j_3}^{i_2i_3}(u+v)\mathcal{L}_{j_2j_4}^{i_1i_4}(u-v).
\end{align}
Rewrite $\langle i\lvert\otimes\langle j\lvert\tilde{k}_{ij}=\langle i\lvert\otimes\langle j\lvert\tilde{k}_{i}^{j}$, we went back to the matrix convention, then (\ref{C8}) yields
\begin{equation}
    \begin{split}
&\tilde{k}_{i_1}^{j_2}(u)\tilde{k}_{i_3}^{i_4}(v+\tfrac{\eta}{2})\mathcal{L}_{i_2j_4}^{i_1i_4}(u+v)\mathcal{L}_{j_1j_3}^{i_2i_3}(u-v)=\tilde{k}_{j_1}^{i_2}(u)\tilde{k}_{i_3}^{i_4}(v+\tfrac{\eta}{2})\mathcal{L}_{i_1j_3}^{i_2i_3}(u+v)\mathcal{L}_{j_2j_4}^{i_1i_4}(u-v)\\
\Leftrightarrow&\tilde{k}_{i_1}^{j_2}(u)\tilde{k}_{i_3}^{i_4}(v+\tfrac{\eta}{2})\mathcal{L}_{i_2i_4}^{i_1j_4}(u+v)\mathcal{L}_{j_1j_3}^{i_2i_3}(u-v)=\tilde{k}_{j_1}^{i_2}(u)\tilde{k}_{i_3}^{i_4}(v+\tfrac{\eta}{2})\mathcal{L}_{i_2j_3}^{i_1i_3}(u+v)\mathcal{L}_{i_1i_4}^{j_2j_4}(u-v)\\
\Leftrightarrow&{}_{1}\langle j_2\lvert{}_{2}\langle j_4\lvert \left(\sigma^{y}K(-u)\right)_1\mathcal{L}_{12}^{t_2}(u+v)(\sigma^{y}K(-v-\tfrac{\eta}{2}))_{2}\mathcal{L}_{12}(u-v)\lvert j_1\rangle_1\lvert j_3\rangle_2\\&={}_{1}\langle j_2\lvert{}_{2}\langle j_4\lvert \mathcal{L}_{12}^{t_1t_2}(u-v)(\sigma^{y}K(-v-\tfrac{\eta}{2}))_{2}\mathcal{L}^{t_1}_{12}(u+v)\left(\sigma^{y}K(-u)\right)_1\lvert j_1\rangle_1\lvert j_3\rangle_2,
\end{split}
\end{equation}
where in the second line we use a transpose. Then use the crossing relation for the Lax operator, one can easily get
\begin{align}
\sigma_{2}^{y}\left(\sigma^{y}K(v)\right)_2\mathcal{L}_{12}(u+v)\sigma_{1}^{y}\left(\sigma^{y}K(u-\tfrac{\eta}{2})\right)_1
\mathcal{L}_{12}(u-v)\no\\
=\mathcal{L}_{12}(u-v)\sigma_{1}^{y}\left(\sigma^{y}K(u-\tfrac{\eta}{2})\right)_1\mathcal{L}_{12}(u+v)\sigma_2\left(\sigma^{y}K(v)\right)_2,
\end{align}
use the relation $\mathcal{L}_{0,j}(u)=R_{0,j}(u-\eta/2)$, one can get the BYB Equation Eq. (\ref{BYBeq}). Here the proof works for both $\mathrm{XXZ}$ spin chain and $\mathrm{XYZ}$ spin chain.

\section{KT-relation in components of XXZ and XYZ chain}
\label{Appendix_D}
The $\mathrm{KT}$-relation Eq. (\ref{KT3}) is useful for the proof of integrable condition in the presence of twist. Expanding in components of Eq. (\ref{KT3}) we have 
\begin{align}\label{TT:Relation}
\begin{aligned}
\left[k_{22}(-u) (A(-u)-A(u))-k_{21}(-u) B(-u)+k_{12}(-u) C(u)\right]|\Psi_0\rangle=0,\\
\left[k_{11}(-u) B(-u)-k_{22}(-u) B(u)+k_{12}(-u) (D(u)-A(-u))\right]|\Psi_0\rangle=0,\\
\left[k_{21}(-u) (A(u)-D(-u))-k_{11}(-u) C(u)+k_{22}(-u) C(-u)\right]|\Psi_0\rangle=0,\\
\left[k_{21}(-u) B(u)-k_{12}(-u) C(-u)+k_{11}(-u) (D(-u)-D(u))\right]|\Psi_0\rangle=0,
\end{aligned}
\end{align}
which means that we can solve $A(-u),B(-u),C(-u),D(-u)$ reversely
\begin{equation}
    \begin{split}
A(-u)\lvert\Psi_0\rangle=&-\frac{1}{\mathfrak{K}(u)}\big(-A(u)k_{11}(-u)k_{22}(-u)-B(u)k_{21}(-u)k_{22}(-u)\\ & +C(u)k_{11}(-u)k_{12}(-u)+D(u)k_{12}(-u)k_{21}(-u)\big)\lvert\Psi_0\rangle,\\
B(-u)\lvert\Psi_0\rangle=&\frac{1}{\mathfrak{K}(u)}\big(A(u)k_{22}(-u)k_{12}(-u)+B(u)k_{22}(-u)k_{22}(-u)\\
&-C(u)k_{12}(-u)k_{12}(-u)-D(u)k_{12}(-u)k_{22}(-u)\big)\lvert\Psi_0\rangle,\\
C(-u)\lvert\Psi_0\rangle=&\frac{1}{\mathfrak{K}(u)}\big(-A(u)k_{21}(-u)k_{11}(-u)-B(u)k_{21}(-u)k_{21}(-u)\\ &+C(u)k_{11}(-u)k_{11}(-u)+D(u)k_{11}(-u)k_{21}(-u)\big),\\
D(-u)\lvert\Psi_0\rangle=&\frac{1}{\mathfrak{K}(u)}\big(-A(u)k_{21}(-u)k_{12}(-u)-B(u)k_{21}(-u)k_{22}(-u)\\&+C(u)k_{11}(-u)k_{12}(-u)+D(u)k_{11}(-u)k_{22}(-u)\big)\lvert\Psi_0\rangle,
\label{tmucomp}
\end{split}
\end{equation}
with $\mathfrak{K}(u)=k_{11}(-u)k_{22}(-u)-k_{12}(-u)k_{21}(-u)$, where $k_{11}(u),\ k_{12}(u),\ k_{21}(u),\ k_{22}(u)$ is defined in Eq. (\ref{K:XXZ}) with the following relation
\begin{equation}
    \begin{split}
    k_{11}(u)=\frac{\alpha_4}{k_4(u)}+\frac{\alpha_3}{k_3(u)},\quad k_{12}(u)=\frac{\alpha_1}{k_1(u)}-\frac{\ii\alpha_2}{k_2(u)},\\
    k_{21}(u)=\frac{\alpha_1}{k_1(u)}+\frac{\ii\alpha_2}{k_2(u)},\quad k_{22}(u)=\frac{\alpha_4}{k_4(u)}-\frac{\alpha_3}{k_3(u)}.
    \end{split}
\end{equation}





\bibliography{SciPost_Example_BiBTeX_File.bib}


\end{document}